\theoremstyle{plain}
\newtheorem{prop}{Proposition}
\theoremstyle{plain}
\newtheorem{cond}{Condition}
\theoremstyle{definition}
\newtheorem{defi}{Definition}
\newcounter{fcounter}
\newcounter{fpcounter}
\newcounter{ccounter}
\DeclareMathOperator{\tr}{tr}
\DeclareMathOperator{\sgn}{sgn}
\begin{document}
\title{Extensions of Lorentzian spacetime geometry:\\From Finsler to Cartan and vice versa}

\author{Manuel Hohmann}
\email{manuel.hohmann@ut.ee}
\affiliation{Teoreetilise F\"u\"usika Labor, F\"u\"usika Instituut, Tartu \"Ulikool, Riia 142, 51014 Tartu, Estonia}

\begin{abstract}
We briefly review two recently developed extensions of the Lorentzian geometry of spacetime and prove that they are in fact closely related. The first is the concept of observer space, which generalizes the space of Lorentzian observers, i.e., future unit timelike vectors, using Cartan geometry. The second is the concept of Finsler spacetimes, which generalizes the Lorentzian metric of general relativity to an observer-dependent Finsler metric. We show that every Finsler spacetime possesses a well-defined observer space that can naturally be equipped with a Cartan geometry. Conversely, we derive conditions under which a Cartan geometry on observer space gives rise to a Finsler spacetime. We further show that these two constructions complement each other. We finally apply our constructions to two gravity theories, MacDowell-Mansouri gravity on observer space and Finsler gravity, and translate their actions from one geometry to the other.
\end{abstract}
\maketitle

\section{Motivation}\label{sec:motivation}
General covariance is one of the most evident features of general relativity. It tells us that all physical quantities can be expressed by sections of some tensor bundle over the spacetime manifold~\(M\), or tensor fields for short. However, when we measure a physical quantity in practice, the measurement does not directly yield an element of some tensor space. Instead we always measure the components of a tensor with respect to a basis of the tensor space, which we must choose before performing the measurement and which is inbuilt into the measurement apparatus. A common way to specify a basis is by providing an orthonormal frame field, i.e., vector fields~\({(f_i, i = 0,\ldots,3)}\) which are orthonormal with respect to the Lorentzian metric of spacetime, \(g_{ab}f_i^af_j^b = \eta_{ij}\), and which yield a basis of the tangent spaces \(T_xM\) for all \(x \in M\). The tensorial nature of physical quantities provides a prescription how measurements performed using different frames are related.

Despite being a central building block of general relativity, the principle of general covariance is broken in various extensions of general relativity. In particular, physical quantities might possess a non-tensorial dependence on the choice of the timelike component \(f_0\) of a frame, which is conventionally identified with the four-velocity of the observer performing the measurement. A large class of such theories are based on the concept to split spacetime into space and time and focus of the dynamics of the geometry of space. Examples of such theories are given by geometrodynamic theories such as loop quantum gravity~\cite{Thiemann:2007zz,Ashtekar:1987gu} and sum-over-histories formulations such as spin foam models~\cite{Rovelli:1995ac,Reisenberger:1996pu,Barrett:1997gw,Baez:1997zt} and causal dynamical triangulations~\cite{Ambjorn:1998xu,Ambjorn:2001cv,Ambjorn:2005qt}. Moreover, a non-trivial dependence of physical quantities on \(f_0\) may be introduced by a breaking of local Lorentz invariance, for example, by a preferred field of observers, or test particles, described by a future unit timelike vector field~\cite{Brown:1994py,Jacobson:2000xp}.

The explicit dependence of physical quantities on the choice of an observer motivates the notion of observer space~\cite{Gielen:2012fz}. Instead of choosing a preferred observer, and thus a preferred four-velocity, one considers the space \(O\) of \emph{all} possible observers. In the case of a Lorentzian spacetime the observer space is simply given as the space of all future unit timelike vectors. This approach has several advantages: the observer-dependence of physical quantities becomes apparent by promoting them to functions on observer space, while a relativistic nature of spacetime is retained since no observer is preferred. Therefore, the concept of observer space unifies ideas from both covariant and canonical approaches to spacetime geometry.

Another key concept of general relativity states that the geometry of spacetime does not only provide a static background for physical theories, but is by itself a dynamical object which describes the dynamics of gravity. If we wish to lift this concept from spacetime to observer space, we need to equip observer space with a geometry that is governed by a suitable choice of gravitational dynamics. In this article we will choose a connection based approach and describe the geometry of observer space in terms of Cartan geometry~\cite{Cartan,Sharpe}. This choice allows not only a formulation of gravitational dynamics using MacDowell-Mansouri gravity~\cite{MacDowell:1977jt}, but also the reconstruction of the metric geometry of a Lorentzian spacetime from its observer space~\cite{Gielen:2012fz}.

The notion of observer space is not limited to the future unit timelike vectors of a Lorentzian manifold. In fact, every mathematical framework that is used to describe the concept of observers should provide a definition of observer space. The framework we discuss in this article is the description of spacetimes in terms of Finsler geometry~\cite{Chern,Bucataru}. Models of this type have been introduced as extensions to Einstein and string gravity~\cite{Horvath,Vacaru:2010fi,Vacaru:2007ng,Vacaru:2002kp}. In particular we will consider a recently developed definition of Finsler spacetimes, which provides a suitable background for both gravity and matter field theories such as electrodynamics, as well as a consistent causal structure extending the causal structure of Lorentzian spacetimes~\cite{Pfeifer:2011tk,Pfeifer:2011xi,Pfeifer:2011ve}. This causal structure and the notion of observers play a crucial role for the concept of Finsler spacetimes. Hence, the definition of a Finsler spacetime comprises the definition of an observer space.

It appears natural to ask whether one can describe the observer space of a Finsler spacetime using Cartan geometry, in the same way as it has been done for the future unit timelike vectors on a Lorentzian manifold. This would provide a simple and physically well-motivated example of a non-Lorentzian observer space Cartan geometry. Conversely, one may ask whether a given observer space Cartan geometry is induced by a Finsler spacetime, and whether this Finsler spacetime may be reconstructed. It is the central aim of this article to show that the former is indeed always possible, while the latter requires certain conditions to be satisfied.

The outline of this article is as follows. We will start with a brief review of the necessary mathematical foundations and their applications to spacetime physics in section~\ref{sec:foundations}. In particular, we will discuss Cartan geometry of observer space in section~\ref{subsec:cartan} and Finsler spacetimes in section~\ref{subsec:finsler}. We will then come to the main part of this article and show how these two different geometrical descriptions of spacetime can be connected. In section~\ref{sec:finslercartan} we will construct the observer space of a Finsler spacetime and equip it with a Cartan geometry. The opposite direction, constructing a Finsler geometry from a Cartan geometry on observer space, will be discussed in section~\ref{sec:cartanfinsler}. In section~\ref{sec:reconstruction} we will discuss whether we can reconstruct a given Finsler spacetime from its observer space Cartan geometry and vice versa. As a physical application of our constructions we will translate the actions of two gravity theories between the different geometric descriptions in section~\ref{sec:gravity}: MacDowell-Mansouri gravity on observer space in section~\ref{subsec:cartangrav} and Finsler gravity in section~\ref{subsec:finslergrav}. We will end with a conclusion in section~\ref{sec:conclusion}.

\section{Foundations}\label{sec:foundations}
The construction we present in this article is based on two different generalizations of the Lorentzian geometry of spacetime. The aim of this section is to briefly review these two generalizations. First we will discuss the concept of observer space and its geometric description in terms of Cartan geometry in section~\ref{subsec:cartan}. We will then discuss Finsler spacetimes in section~\ref{subsec:finsler} and review some basic concepts of Finsler geometry.

\subsection{Cartan geometry of observer space}\label{subsec:cartan}
In this section we give a brief summary of the Cartan geometric description of observer space presented in~\cite{Gielen:2012fz}. We start with a very basic review of Cartan geometry in general, before we apply this general framework to the geometries of spacetime and observer space. We briefly discuss the symmetry algebras of the maximally symmetric Lorentzian spacetimes, since they will form a crucial ingredient of our construction shown in the remainder of this article.

In order to describe the geometry of observer space, we employ a framework developed by Cartan under the name ``method of moving frames''~\cite{Cartan}. The basic idea of his construction is to locally compare the geometry of a manifold \(M\) to that of a homogeneous space, i.e., the coset space \(G/H\) of a Lie group \(G\) and a closed subgroup \(H \subset G\). Homogeneous spaces were extensively studied in Klein's Erlangen program and are hence also known as Klein geometries. Among other properties they carry the structure of a principal $H$-bundle \(\pi: G \to G/H\) and a connection given by the Maurer-Cartan 1-form \(A \in \Omega^1(G,\mathfrak{g})\) on \(G\) taking values in the Lie algebra \(\mathfrak{g}\) of \(G\). Using these structures in order to describe the local geometry of \(M\), we can define a Cartan geometry as follows:
\begin{defi}[Cartan geometry]
A \emph{Cartan geometry} modeled on a Klein geometry \(G/H\) is a principal $H$-bundle \(\pi: P \to M\) together with a $\mathfrak{g}$-valued 1-form \(A \in \Omega^1(P,\mathfrak{g})\) on \(P\), such that
\begin{list}{\textrm{C\arabic{ccounter}}.}{\usecounter{ccounter}}
\item\label{cartan:isomorphism}
For each \(p \in P\), \(A_p: T_pP \to \mathfrak{g}\) is a linear isomorphism.
\item\label{cartan:equivariant}
\(A\) is $H$-equivariant: \((R_h)^*A = \mathrm{Ad}(h^{-1}) \circ A\) \(\forall h \in H\).
\item\label{cartan:mcform}
\(A\) restricts to the Maurer-Cartan form on vertical vectors \(v \in \ker\pi_*\).
\end{list}
\end{defi}
Note in particular that condition~\ref{cartan:isomorphism} implies the existence of inverse functions \(\underline{A}_p: \mathfrak{g} \to T_pP\), corresponding to a map \(\underline{A}: \mathfrak{g} \to \Gamma(TP)\) which assigns to each element of the Lie algebra \(\mathfrak{g}\) a nowhere vanishing vector field, called its fundamental vector field. These, as well as further properties of Cartan geometry, will be discussed later in this article as they are needed. For now we restrict ourselves to a few practical examples and focus on the Cartan geometric description of spacetime and observer space. We start by discussing the model Klein geometries we will be working with.

Here and in the remainder of this article, let \(G\), \(H\) and \(K\) denote the frequently used groups
\begin{equation}\label{eqn:groups}
G = \begin{cases}
\mathrm{SO}_0(4,1) & \text{for } \Lambda > 0\\
\mathrm{ISO}_0(3,1) & \text{for } \Lambda = 0\\
\mathrm{SO}_0(3,2) & \text{for } \Lambda < 0
\end{cases}\,, \quad H = \mathrm{SO}_0(3,1)\,, \quad K = \mathrm{SO}(3)\,,
\end{equation}
where \(\mathrm{ISO}_0(3,1) = \mathrm{SO}_0(3,1) \ltimes \mathbb{R}^{3,1}\) is the proper orthochronous Poincar\'{e} group and the subscript~\(0\) indicates the connected component of the corresponding group. Note that \(G\) may denote any of the three groups listed above. The homogeneous spaces \(G/H\) associated to these groups are de Sitter space, Minkowski space and anti de Sitter space, corresponding to either a positive, zero or negative cosmological constant \(\Lambda\). The connection between the choice of the group \(G\) and the sign of the cosmological constant will be discussed in more detail in places where it becomes relevant. Whenever this is not noted explicitly, the result is independent of the choice of a particular group~\(G\).

Let \(\mathfrak{g}\), \(\mathfrak{h}\) and \(\mathfrak{k}\) denote the Lie algebras of \(G\), \(H\) and \(K\), respectively. For convenience we introduce a component notation for elements of these algebras. First observe that \(\mathfrak{g}\) splits into irreducible subrepresentations of the adjoint representation of \(H \subset G\),
\begin{equation}\label{eqn:algsplit1}
\mathfrak{g} = \mathfrak{h} \oplus \mathfrak{z}\,.
\end{equation}
These subspaces correspond to infinitesimal Lorentz transforms \(\mathfrak{h}\) and infinitesimal translations \(\mathfrak{z}\) of the homogeneous spacetimes \(G/H\). We can use this split to uniquely decompose any algebra element \(a \in \mathfrak{g}\) in the form
\begin{equation}\label{eqn:component}
a = h + z = \frac{1}{2}h^i{}_j\mathcal{H}_i{}^j + z^i\mathcal{Z}_i\,,
\end{equation}
where \(\mathcal{H}_i{}^j\) are the generators of \(\mathfrak{h} = \mathfrak{so}(3,1)\) and \(\mathcal{Z}_i\) are the generators of translations on \(G/H\). They satisfy the algebra relations
\begin{gather}
[\mathcal{H}_i{}^j,\mathcal{H}_k{}^l] = \delta^j_k\mathcal{H}_i{}^l - \delta^l_i\mathcal{H}_k{}^j + \eta_{ik}\eta^{lm}\mathcal{H}_m{}^j - \eta^{jl}\eta_{km}\mathcal{H}_i{}^m\,,\label{eqn:algebra}\\
[\mathcal{H}_i{}^j,\mathcal{Z}_k] = \delta^j_k\mathcal{Z}_i - \eta_{ik}\eta^{jl}\mathcal{Z}_l\,, \qquad [\mathcal{Z}_i,\mathcal{Z}_j] = \sgn\Lambda\,\eta_{ik}\mathcal{H}_j{}^k\,.\nonumber
\end{gather}
The last expression explicitly depends on the choice of the group \(G\), which can conveniently be expressed using the sign of the cosmological constant \(\Lambda\). In the following we will use this notation to describe the Cartan geometry of spacetime and observer space.

As a first example, let \((M,g)\) be a Lorentzian manifold and \(\tilde{\pi}: P \to M\) its oriented, time-oriented, orthonormal frame bundle. The proper orthochronous Lorentz group \(H = \mathrm{SO}_0(3,1)\) acts freely and transitively on the fibers of the frame bundle, hence turning it into a principal $H$-bundle. To promote this bundle to a Cartan geometry modeled on the Klein geometry \(G/H\), we need to construct a $\mathfrak{g}$-valued 1-form \(A\) on \(P\) that satisfies the conditions~\ref{cartan:isomorphism} to~\ref{cartan:mcform} listed above. We can divide this task into two parts by recalling that \(\mathfrak{g}\) splits into two subspaces according to~\eqref{eqn:algsplit1}. This induces a decomposition \(A = \omega + e\) of the Cartan connection into a $\mathfrak{h}$-valued part \(\omega\) and a $\mathfrak{z}$-valued part \(e\). The latter we set equal to the solder form, which in component notation can be written as
\begin{equation}\label{eqn:solderform}
e^i = f^{-1}{}^i_adx^a\,,
\end{equation}
where the coordinates \(f_i^a\) on the fibers of \(P\) are defined as the components of the frames \(f_i\) in the coordinate basis of the manifold coordinates \(x^a\), and \(f^{-1}{}^i_a\) denote the corresponding inverse frame components. For the $\mathfrak{h}$-valued part \(\omega\) we choose the Levi-Civita connection. Using the same component notation as above it reads
\begin{equation}\label{eqn:levicivita}
\omega^j{}_i = f^{-1}{}^j_adf_i^a + f^{-1}{}^j_af_i^b\Gamma^a{}_{bc}dx^c\,,
\end{equation}
where \(\Gamma^a{}_{bc}\) denotes the Christoffel symbols. It is not difficult to check that the $\mathfrak{g}$-valued 1-form~\(A\) defined above indeed satisfies conditions~\ref{cartan:isomorphism} to~\ref{cartan:mcform} of a Cartan connection, and thus defines a Cartan geometry modeled on \(G/H\). A well-known result of Cartan geometry states that the metric \(g\) can be reconstructed from the Cartan geometry, up to a global scale factor~\cite{Sharpe}.

We now turn our focus from the Cartan geometry of spacetime to the Cartan geometry of observer space. Let again \((M,g)\) be a Lorentzian manifold, \(\tilde{\pi}: P \to M\) its oriented, time-oriented, orthonormal frame bundle and \(O\) the space of all future unit timelike vectors on \(M\). Note that there exists a canonical projection \(\pi: P \to O\), which simply discards the spatial components of a frame. Since the rotation group \(K\) acts freely and transitively on the spatial frame components, and hence on the pre-images under \(\pi\), \(P \to O\) carries the structure of a principal $K$-bundle. We can promote this bundle to a Cartan geometry by choosing the same Cartan connection \(A\) on \(P\) as we already did in our first example displayed above. One easily checks that this choice again satisfies conditions~\ref{cartan:isomorphism} to~\ref{cartan:mcform}, but now for a Cartan geometry modeled on \(G/K\). Moreover, one can show that both the underlying spacetime manifold \(M\) and its Lorentzian metric \(g\) can be reconstructed from the observer space Cartan geometry, again up to a global scale factor~\cite{Gielen:2012fz}.

Since the Cartan geometric description of observer space is based on the model Klein geometry~\(G/K\) instead of \(G/H\), the subgroup \(K \subset G\) of spatial rotations takes a role similar to that of the Lorentz group \(H\) in the case of spacetime geometry. In particular, the adjoint representation of~\(K\) on the Lie algebra \(\mathfrak{g}\) induces a split of \(\mathfrak{g}\) in analogy to~\eqref{eqn:algsplit1} into its irreducible subrepresentations
\begin{equation}\label{eqn:algsplit2}
\mathfrak{g} = \mathfrak{k} \oplus \mathfrak{y} \oplus \vec{\mathfrak{z}} \oplus \mathfrak{z}^0\,.
\end{equation}
We can hence uniquely decompose every algebra element \(a \in \mathfrak{g}\) into these four components, which can be written as
\begin{equation}\label{eqn:decomposition}
a = \frac{1}{2}h^{\alpha}{}_{\beta}\mathcal{H}_{\alpha}{}^{\beta} + \frac{1}{2}(h^{\alpha}{}_0\mathcal{H}_{\alpha}{}^0 + h^0{}_{\alpha}\mathcal{H}_0{}^{\alpha}) + z^{\alpha}\mathcal{Z}_{\alpha} + z^0\mathcal{Z}_0\,,
\end{equation}
using the notation introduced in~\eqref{eqn:component} and with Greek indices taking values \(1,\ldots,3\). Cartan geometry provides a physical interpretation of these components as infinitesimal rotations, boosts and spatial and temporal translations.

This completes our brief review of the Cartan geometric description of spacetime and observer space. In the following section we discuss the notion of Finsler spacetimes, which form the second fundamental ingredient of our construction.

\subsection{Finsler spacetimes}\label{subsec:finsler}
In the previous section~\ref{subsec:cartan} we have reviewed the description of observer space in the language of Cartan geometry. Since it is our aim to construct an observer space geometry out of a Finsler geometry, we will now turn our focus to the Finsler geometric description of spacetimes. In this section we briefly review a definition of Finsler geometry which is suitable to equip a manifold with a Finsler metric of Lorentzian signature. We list the basic geometric objects on a Finsler spacetime and their properties, as far as they will become relevant for our construction in the following two sections.

The essential element of Finsler geometry is the length functional
\begin{equation}\label{eqn:finslerlength}
s[x] = \int d\tau\,F(x(\tau),\dot{x}(\tau))\,,
\end{equation}
which measures the length of a curve \(\tau \mapsto x(\tau)\). The Finsler function \(F: TM \to \mathbb{R}\) is a function on the tangent bundle \(TM\) of a manifold \(M\), on which we introduce coordinates as follows. Given coordinates \((x^a)\) on \(M\), we can denote any tangent vector by
\begin{equation}
y^a\left.\frac{\partial}{\partial x^a}\right|_x
\end{equation}
in the coordinate basis \(\partial/\partial x^a\) of \(TM\). The coordinates \((x^a,y^a)\) on \(TM\) are called induced coordinates and will be used throughout this article. For convenience we further introduce the notation
\begin{equation}
\left\{\partial_a = \frac{\partial}{\partial x^a}, \bar{\partial}_a = \frac{\partial}{\partial y^a}\right\}
\end{equation}
for the induced coordinate basis of \(TTM\).

In order to describe the geometry of spacetimes we need a generalization of the widely used notion of Finsler geometries with Euclidean signature~\cite{Chern,Bucataru} to Lorentzian signature. We will adapt the following definition of Finsler spacetimes, which is physically well-motivated since it provides a suitable causality and background geometry for electrodynamics and gravity~\cite{Pfeifer:2011tk,Pfeifer:2011xi}:
\begin{defi}[Finsler spacetime]
A \emph{Finsler spacetime} \((M,L,F)\) is a four-dimensional, connected, Hausdorff, paracompact, smooth manifold \(M\) equipped with continuous real functions \(L, F\) on the tangent bundle \(TM\) which has the following properties:
\begin{list}{\textrm{F\arabic{fcounter}}.}{\usecounter{fcounter}}
\item\label{finsler:lsmooth}
\(L\) is smooth on the tangent bundle without the zero section \(TM \setminus \{0\}\).
\item\label{finsler:lhomogeneous}
\(L\) is positively homogeneous of real degree \(n \geq 2\) with respect to the fiber coordinates of~\(TM\),
\begin{equation}
L(x,\lambda y) = \lambda^nL(x,y) \quad \forall \lambda > 0\,,
\end{equation}
and defines the Finsler function \(F\) via \(F(x,y) = |L(x,y)|^{\frac{1}{n}}\).
\item\label{finsler:lreversible}
\(L\) is reversible: \(|L(x,-y)| = |L(x,y)|\).
\item\label{finsler:lhessian}
The Hessian
\begin{equation}
g^L_{ab}(x,y) = \frac{1}{2}\bar{\partial}_a\bar{\partial}_bL(x,y)
\end{equation}
of \(L\) with respect to the fiber coordinates is non-degenerate on \(TM \setminus X\), where \(X \subset TM\) has measure zero and does not contain the null set \(\{(x,y) \in TM | L(x,y) = 0\}\).
\item\label{finsler:timelike}
The unit timelike condition holds, i.e., for all \(x \in M\) the set
\begin{equation}
\Omega_x = \left\{y \in T_xM \left| |L(x,y)| = 1, g^L_{ab}(x,y) \text{ has signature } (\epsilon,-\epsilon,-\epsilon,-\epsilon), \epsilon = \frac{L(x,y)}{|L(x,y)|}\right.\right\}
\end{equation}
contains a non-empty closed connected component~\({S_x \subseteq \Omega_x \subset T_xM}\).
\end{list}
\end{defi}
In contrast to the standard notion of Euclidean Finsler geometry, which is defined only by the Finsler function \(F\), this definition uses an auxiliary function \(L\) in order to define a suitable null structure via condition~\ref{finsler:lhessian} and timelike vectors via condition~\ref{finsler:timelike}. It further ensures that the Finsler function \(F\) has the following properties, as proven in~\cite{Pfeifer:2011tk}:
\begin{prop}
The Finsler function \(F\) of a Finsler spacetime \((M,L,F)\) satisfies:
\begin{list}{\textrm{F\arabic{fpcounter}}'.}{\usecounter{fpcounter}}
\item\label{finsler:fsmooth}
\(F\) is smooth on the tangent bundle where \(L\) is non-vanishing, \(TM \setminus \{L = 0\}\).
\item\label{finsler:fpositive}
\(F\) is non-negative, \(F(x,y) \geq 0\).
\item\label{finsler:fhomorev}
\(F\) is positively homogeneous of degree one in the fiber coordinates and reversible, i.e.,
\begin{equation}
F(x,\lambda y) = |\lambda|F(x,y) \quad \forall \lambda \in \mathbb{R}\,.
\end{equation}
\item\label{finsler:fmetric}
The Finsler metric
\begin{equation}
g^F_{ab}(x,y) = \frac{1}{2}\bar{\partial}_a\bar{\partial}_bF^2(x,y)
\end{equation}
of \(F\) is defined on \(TM \setminus \{L = 0\}\) and is non-degenerate on \(TM \setminus (X \cup \{L = 0\})\).
\end{list}
\end{prop}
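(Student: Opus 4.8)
The plan is to obtain the four properties \ref{finsler:fsmooth}--\ref{finsler:fmetric} directly from the defining relation $F = |L|^{1/n}$ together with the smoothness \ref{finsler:lsmooth}, homogeneity \ref{finsler:lhomogeneous} and reversibility \ref{finsler:lreversible} of $L$, handling \ref{finsler:fpositive}, \ref{finsler:fsmooth} and \ref{finsler:fhomorev} first since they are essentially immediate, and reserving the real work for \ref{finsler:fmetric}. Property \ref{finsler:fpositive} holds because $F = |L|^{1/n}$ with $n \geq 2 > 0$ is a non-negative real number by construction. For \ref{finsler:fsmooth} I would first note that the zero section lies in $\{L=0\}$: letting $\lambda \to 0^+$ in $L(x,\lambda y) = \lambda^n L(x,y)$ and using continuity of $L$ gives $L(x,0) = 0$. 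Hence on $TM\setminus\{L=0\}$ one is away from the zero section, so $L$ is smooth by \ref{finsler:lsmooth}, and being nonzero it has locally constant sign, so $|L|$ is smooth; composing with the smooth map $t \mapsto t^{1/n}$ on $(0,\infty)$ shows $F$ is smooth on $TM\setminus\{L=0\}$. For \ref{finsler:fhomorev} I would distinguish three cases: for $\lambda > 0$, $F(x,\lambda y) = |\lambda^n L(x,y)|^{1/n} = \lambda F(x,y)$ by \ref{finsler:lhomogeneous}; for $\lambda < 0$, writing $\lambda = -\mu$ with $\mu > 0$ and combining \ref{finsler:lhomogeneous} with \ref{finsler:lreversible} gives $F(x,\lambda y) = |L(x,-\mu y)|^{1/n} = |L(x,\mu y)|^{1/n} = \mu F(x,y)$; and for $\lambda = 0$ the claim $F(x,0) = 0$ follows from $L(x,0) = 0$. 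Together these give $F(x,\lambda y) = |\lambda|F(x,y)$ for all $\lambda \in \mathbb{R}$, which is positive homogeneity of degree one together with reversibility.

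The substantive step is \ref{finsler:fmetric}. That $g^F_{ab}$ is defined on $TM\setminus\{L=0\}$ is clear once \ref{finsler:fsmooth} is in hand, since then $F^2 = |L|^{2/n}$ is smooth there. For non-degeneracy I would work on a connected component of $TM\setminus\{L=0\}$, on which $\epsilon = L/|L| \in \{+1,-1\}$ is constant, so $|L| = \epsilon L$. Differentiating $F^2 = (\epsilon L)^{2/n}$ twice in the fiber coordinates, using $\bar\partial_a\bar\partial_b L = 2g^L_{ab}$ and eliminating the first derivatives of $L$ via Euler's theorem for the degree-$n$ homogeneous function $L$ in the forms $y^a\bar\partial_a L = nL$, $\bar\partial_a L = \tfrac{2}{n-1}g^L_{ab}y^b$ and $g^L_{ab}y^a y^b = \tfrac{n(n-1)}{2}L$, I expect to arrive at
\begin{equation}
g^F_{ab} = \frac{2\epsilon}{n}|L|^{\frac{2}{n}-1}\left(g^L_{ab} + \frac{2\epsilon(2-n)}{n(n-1)^2|L|}\,g^L_{ac}y^c\,g^L_{bd}y^d\right)\,,
\end{equation}
i.e. a nowhere-vanishing conformal rescaling of $g^L_{ab}$ plus a rank-one correction along $v_a = g^L_{ab}y^b$. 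Writing the bracket as $g^L_{ab} + c\,v_a v_b$ and applying the matrix determinant lemma gives $\det(g^L_{ab} + c\,v_a v_b) = \det(g^L_{ab})\,(1 + c\,g^L_{ab}y^a y^b)$, where one uses $g^L_{ab}y^b = v_a$; since $g^L_{ab}y^a y^b = \tfrac{n(n-1)}{2}L$ the correction factor collapses to the constant $1 + \tfrac{2-n}{n-1} = \tfrac{1}{n-1}$, which is nonzero because $n \geq 2$. Hence $\det g^F_{ab}$ and $\det g^L_{ab}$ agree up to a factor vanishing nowhere on $TM\setminus\{L=0\}$, so $g^F_{ab}$ is non-degenerate exactly where $g^L_{ab}$ is, i.e. on $TM\setminus X$ by \ref{finsler:lhessian}, and $L \neq 0$, which is $TM\setminus(X\cup\{L=0\})$ as claimed.

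The main obstacle is the computation producing the displayed formula for $g^F_{ab}$, in particular pinning down the coefficient of the $g^L_{ac}y^c\,g^L_{bd}y^d$ term exactly: it is precisely this value that makes the determinant-lemma correction factor reduce to a point-independent constant, and without that cancellation one would not obtain non-degeneracy on all of $TM\setminus(X\cup\{L=0\})$. The cancellation relies crucially on $L$ being homogeneous of degree exactly $n$ and on $n \neq 1$; the special case $n = 2$, where the rank-one correction drops out and $g^F_{ab} = \epsilon\,g^L_{ab}$, provides a convenient consistency check on the general formula.
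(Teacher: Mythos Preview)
Your argument is correct. The paper itself does not prove this proposition in the text; it merely records the properties and cites the original source~\cite{Pfeifer:2011tk} where they were established. Your direct derivation is sound and self-contained: properties \ref{finsler:fpositive}, \ref{finsler:fsmooth} and \ref{finsler:fhomorev} follow immediately from $F=|L|^{1/n}$ as you indicate, and for \ref{finsler:fmetric} your expression for $g^F_{ab}$ as a nowhere-vanishing conformal rescaling of $g^L_{ab}$ plus a rank-one correction along $v_a=g^L_{ab}y^b$ is correct, with the matrix determinant lemma then reducing non-degeneracy of $g^F$ to that of $g^L$. The cancellation you single out---that $1+c\,g^L_{ab}y^ay^b$ collapses to the point-independent constant $1/(n-1)$---is precisely what makes the argument go through, and it indeed rests only on $n\geq 2$. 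One small point of exposition: in the determinant-lemma step the scalar that enters is $v_a(g^L)^{ab}v_b$, and it is the contraction of $v_a=g^L_{ab}y^b$ with the \emph{inverse} metric that produces $g^L_{cd}y^cy^d$; your sentence compresses this slightly, but the computation and conclusion are correct.
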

In the following, however, we will not make use of the auxiliary function \(L\) since we will not be dealing with the null structure of Finsler spacetimes, and thus shorten the notation \((M,L,F)\) to~\((M,F)\). For our purposes it will be sufficient to work with the Finsler function \(F\) and its Finsler metric \(g^F\) only. In particular, they can be used to define the notion of Finsler geodesics, i.e., curves~\({\tau \mapsto x(\tau)}\) that minimize the length functional~\eqref{eqn:finslerlength}. The corresponding geodesic equation can be written in arc length parametrization as
\begin{equation}\label{eqn:geodesic}
\ddot{x}^a + N^a{}_b(x,\dot{x})\dot{x}^b = 0\,,
\end{equation}
where \(N^a{}_b\) denotes the coefficients of the the \emph{Cartan non-linear connection}
\begin{equation}\label{eqn:nonlinconn}
N^a{}_b = \frac{1}{4}\bar{\partial}_b\left[g^{F\,ac}(y^d\partial_d\bar{\partial}_cF^2 - \partial_cF^2)\right]\,.
\end{equation}
The Cartan non-linear connection induces a unique split of the tangent bundle \(TTM\) into horizontal and vertical parts, \(TTM = HTM \oplus VTM\). The horizontal tangent bundle \(HTM\) is spanned by the vector fields
\begin{equation}
\{\delta_a = \partial_a - N^b{}_a\bar{\partial}_b\}\,,
\end{equation}
while the vertical tangent bundle \(VTM\) is spanned by \(\{\bar{\partial}_a\}\). The corresponding basis \(\{\delta_a,\bar{\partial}_a\}\) that respects this split is called the \emph{Berwald basis}. Its dual basis of \(T^*TM\) is given by
\begin{equation}
\{dx^a, \delta y^a = dy^a + N^a{}_bdx^b\}\,.
\end{equation}
We finally remark that the tangent bundle \(TM\) of a Finsler spacetime is equipped with a metric~\(G_S\) called the \emph{Sasaki metric}, which can most conveniently be expressed in the Berwald basis as
\begin{equation}\label{eqn:sasakimetric}
G_S = g^F_{ab}dx^a \otimes dx^b + g^F_{ab}\delta y^a \otimes \delta y^b\,.
\end{equation}
One easily reads off that the horizontal and vertical tangent spaces are mutually orthogonal with respect to the Sasaki metric.

This completes our brief review of mathematical foundations. We now have all necessary tools at hand to show how Cartan geometry on observer space as displayed in the preceding section~\ref{subsec:cartan} and Finsler spacetimes as displayed in this section are related. This will be done in the following sections.

\section{Cartan geometry of Finsler observer space}\label{sec:finslercartan}
Using the foundations reviewed in the preceding section~\ref{sec:foundations} we are now in the position to construct the observer space of a Finsler spacetime and to equip it with a Cartan geometry modeled on the Klein geometry \(G/K\), where \(G\) and \(K\) denote the groups displayed in~\eqref{eqn:groups}. Our construction will proceed in several steps. In section~\ref{subsec:observerspace} we will construct the observer space of a Finsler spacetime and discuss some of its geometric properties. We will then choose a principal $K$-bundle over observer space and equip it with a Cartan connection in section~\ref{subsec:cartanconnection}. The fundamental vector fields associated to the Cartan connection will be constructed in section~\ref{subsec:fundvecfields}. We will then discuss various properties of this Cartan geometry. In section~\ref{subsec:tangbundsplit} we will show how the decomposition of \(\mathfrak{g}\) into irreducible subrepresentations of the adjoint representation of \(K\) leads to a split of the tangent bundle of observer space into four subbundles, each of which has a clear physical interpretation. One of these subbundles, which corresponds to the notion of time translation, will be discussed in further detail in section~\ref{subsec:timetranslation}. We finally calculate the curvature of the Cartan connection in section~\ref{subsec:curvature}.

\subsection{Observer space of Finsler spacetimes}\label{subsec:observerspace}
Our aim is to proceed in analogy to the case of a Lorentzian spacetime and to define the observer space \(O \subset TM\) of a Finsler spacetime as the space of future unit timelike vectors, which correspond to the four-velocities of physical observers. It follows from the causal structure of Finsler spacetimes~\cite{Pfeifer:2011tk} that at each point \(x \in M\) the physical observers are given by the set \(S_x\) from condition~\ref{finsler:timelike} shown in section~\ref{subsec:finsler}. We therefore define the observer space
\begin{equation}\label{eqn:observerspace}
O = \bigcup_{x \in M}S_x\,.
\end{equation}
Note that condition~\ref{finsler:timelike} ensures only existence but not uniqueness of \(S_x\). In fact, it follows from the reversibility property~\ref{finsler:lreversible} that if \(S_x\) is a closed connected component of \(\Omega_x\), also \(-S_x\) is a closed connected component. The choice of one of these components corresponds to a choice of time orientation on the Finsler spacetime~\cite{Pfeifer:2011tk}.

Before we proceed, we remark a few properties of Finsler observer space. It immediately follows from condition~\ref{finsler:timelike} that \(F(x,y) = 1\) for all \((x,y) \in O\). From properties~\ref{finsler:fhomorev} and \ref{finsler:fmetric} of the Finsler function it then follows that \(g^F_{ab}(x,y)y^ay^b = 1\). Note further that \(O\) is a seven-dimensional submanifold of \(TM\) and that its tangent spaces \(T_{(x,y)}O\) are precisely given by the tangent vectors~\({v \in T_{(x,y)}TM}\) which satisfy \(vF = 0\).

The tangent bundle \(TO\) of observer space deserves a deeper discussion. It naturally splits into a direct sum of three subbundles
\begin{equation}\label{eqn:tangbundsplit}
TO = VO \oplus HO = VO \oplus \vec{H}O \oplus H^0O\,,
\end{equation}
which are mutually orthogonal with respect to the restriction of the Sasaki metric~\eqref{eqn:sasakimetric} to \(O\). The vertical bundle \(VO\) is given by the kernel of the pushforward \(\pi'_*\) of the natural projection~\({\pi': O \to M, (x,y) \mapsto x}\) of observer space onto the underlying spacetime manifold. It is constituted by the tangent spaces to the three-dimensional submanifolds \(S_x \subset O\), and is thus a three-dimensional subbundle of \(TO\). The horizontal bundle \(HO\) is the four-dimensional orthogonal complement to \(VO\) and further splits into a three-dimensional spatial part \(\vec{H}O\) and a one-dimensional temporal part \(H^0O\). This split is induced by the contact form
\begin{equation}\label{eqn:contactform}
\alpha = g^F_{ab}y^adx^b \in \Omega^1(O)\,,
\end{equation}
which vanishes on the vertical bundle \(VO\). Its kernel on the horizontal bundle defines the spatial subbundle \(\vec{H}O\). Finally, the temporal subbundle \(H^0O\) is the orthogonal complement of \(\vec{H}O\) in~\(HO\). It is the span of the Reeb vector field
\begin{equation}\label{eqn:reebvectorfield}
\mathbf{r} = y^a\delta_a \in \Gamma(TO)\,,
\end{equation}
which is the unique vector field on \(O\) normalized by \(\alpha\) and whose flow preserves \(\alpha\), i.e., \(\alpha(\mathbf{r}) = 1\) and \(\mathcal{L}_{\mathbf{r}}\alpha = 0\), where \(\mathcal{L}\) denotes the Lie derivative. The split of the tangent bundle has a simple physical interpretation. The vertical tangent vectors correspond to infinitesimal boosts, which change the four-velocity \(y^a\) of an observer, but not its location \(x^a\). The horizontal tangent vectors correspond to infinitesimal translations changing the location of an observer without changing its four-velocity. The horizontal vectors further split into a spatial and a temporal part, corresponding to translations orthogonal or parallel to the observer four-velocity. We will re-encounter this split and its physical interpretation later in section~\ref{subsec:tangbundsplit}.

The one-dimensional temporal horizontal bundle \(H^0O\) spanned by the Reeb vector field~\eqref{eqn:reebvectorfield} is of particular interest. Curves on \(O\) whose tangent vectors lie in \(H^0O\) connect observers which differ by a time translation only. This in particular applies to integral curves \(\tau \mapsto (x(\tau),y(\tau))\) of the Reeb vector field \(\mathbf{r}\), which are characterized by two properties. First, they are canonical lifts of curves on the spacetime manifold \(M\), i.e., the tangent vectors of the projected curves \(\tau \mapsto x(\tau)\) on \(M\) satisfy \(\dot{x}(\tau) = y(\tau)\). Since \(y(\tau) \in O\) is unit timelike, it then immediately follows that the projected curves are timelike curves which are parametrized by their arc length. Curves of this type can be interpreted as observer trajectories. Second, the projected curves on \(M\) are Finsler geodesics, i.e., they satisfy the geodesic equation~\eqref{eqn:geodesic}. They therefore correspond to the time evolution of freely falling observers. This time evolution will be discussed further in section~\ref{subsec:timetranslation}.

\subsection{The Cartan connection}\label{subsec:cartanconnection}
In the previous section we have seen that the observer space \(O\) of a Finsler spacetime \((M,F)\) is a seven-dimensional submanifold of \(TM\), which is naturally equipped with several geometric structures. We now aim to promote \(O\) to a Cartan geometry modeled on \(G/K\). For this purpose we first need to construct a principal $K$-bundle \(\pi: P \to O\). We will then equip \(P\) with a Cartan connection \(A \in \Omega^1(P,\mathfrak{g})\), which satisfies the conditions~\ref{cartan:isomorphism} to~\ref{cartan:mcform} listed in section~\ref{subsec:cartan}.

Recall from section~\ref{subsec:cartan} that in the Lorentzian case we have chosen \(P\) to be the orthonormal, oriented, time-oriented frames on \(M\) and \(\pi\) the canonical projection to \(O\) that discards the spatial components of a frame. We can proceed in full analogy here and define \(P\) as the set of all oriented, time-oriented frames \((x,f)\) on \(M\) whose time component lies in \(O\) and which are orthonormal with respect to the Finsler metric, \(g^F_{ab}(x,f_0)f_i^af_j^b = -\eta_{ij}\). One easily checks that \(\pi: P \to O\) is indeed a principal $K$-bundle and that \(K\) acts on \(P\) by a spatial rotation of the frames. Note that in contrast to the Lorentzian case, \(P\) will in general not carry the structure of a principal $H$-bundle over \(M\).

We now equip this principal $K$-bundle with a Cartan connection \(A \in \Omega^1(P,\mathfrak{g})\), hence turning it into a Cartan geometry modeled on \(G/K\). Recall that the Lie algebra \(\mathfrak{g} = \mathfrak{h} \oplus \mathfrak{z}\) splits into a direct sum of vector spaces. Thus, also the Cartan connection \(A = \omega + e\) splits into two components~\({\omega \in \Omega^1(P,\mathfrak{h})}\) and \(e \in \Omega^1(P,\mathfrak{z})\). Our aim is again to proceed in analogy to the Lorentzian case, where \(e\) is given by the solder form and \(\omega\) is given by the Levi-Civita connection. In fact, the solder form can naturally be generalized to Finsler spacetimes and in our component notation be written as
\begin{equation}\label{eqn:connectionz}
e^i = f^{-1}{}^i_adx^a\,,
\end{equation}
which is identical to the corresponding expression~\eqref{eqn:solderform} on a Lorentzian manifold. We are thus left with constructing the $\mathfrak{h}$-valued part \(\omega\) of the Cartan connection, which generalizes the Levi-Civita connection~\eqref{eqn:levicivita}. Note that there is a simple geometric interpretation for \(\omega\) on a Lorentzian manifold. For a curve \(\tau \mapsto (x(\tau),f(\tau))\) on \(P\) it measures the covariant derivative of the frame vectors \(f_i\) along the projected curve \(\tau \mapsto x(\tau)\) on \(M\). For a tangent vector \(v \in TP\) this yields
\begin{equation}\label{eqn:covderivative}
f_j\omega^j{}_i(v) = \nabla_{\tilde{\pi}_*(v)}f_i\,.
\end{equation}
If we aim to generalize this geometric property to Finsler spacetimes, we find that in Finsler geometry covariant derivatives are not defined on the spacetime manifold \(M\), but on the tangent bundle \(TM\), and thus also on the submanifold \(O \subset TM\). It therefore appears natural to simply replace the projection \(\tilde{\pi}\) to the spacetime manifold \(M\) in~\eqref{eqn:covderivative} with the projection \(\pi\) to observer space \(O\), and to define the $\mathfrak{h}$-valued part \(\omega\) of the Cartan connection by
\begin{equation}\label{eqn:covderivative2}
f_j\omega^j{}_i(v) = \nabla_{\pi_*(v)}f_i\,.
\end{equation}
We finally need to find a suitable covariant derivative \(\nabla\) on the observer space \(O\), i.e., a suitable connection on the tangent bundle \(TM\). Since we are dealing with orthonormal frames, the connection we choose must in particular preserve this orthonormality property. We therefore need a connection on \(TM\) which is compatible with the Finsler metric. A connection that satisfies this condition is given by the Cartan linear connection~\cite{Cartan2}, which can be specified in terms of its covariant derivative
\begin{equation}
\nabla_{\delta_a}\delta_b = F^c{}_{ab}\delta_c\,, \quad \nabla_{\delta_a}\bar{\partial}_b = F^c{}_{ab}\bar{\partial}_c\,, \quad \nabla_{\bar{\partial}_a}\delta_b = C^c{}_{ab}\delta_c\,, \quad \nabla_{\bar{\partial}_a}\bar{\partial}_b = C^c{}_{ab}\bar{\partial}_c
\end{equation}
in the Berwald basis, where the connection coefficients \(F^a{}_{bc}\) and \(C^a{}_{bc}\) are given by
\begin{subequations}
\begin{align}
F^a{}_{bc} &= \frac{1}{2}g^{F\,ad}(\delta_bg^F_{cd} + \delta_cg^F_{bd} - \delta_dg^F_{bc})\,,\\
C^a{}_{bc} &= \frac{1}{2}g^{F\,ad}(\bar{\partial}_bg^F_{cd} + \bar{\partial}_cg^F_{bd} - \bar{\partial}_dg^F_{bc})\,.
\end{align}
\end{subequations}
Inserting this into~\eqref{eqn:covderivative2} we can solve for \(\omega^j{}_i\) and obtain
\begin{align}
\omega^j{}_i &= f^{-1}{}^j_adf_i^a + f^{-1}{}^j_af_i^b\left[F^a{}_{bc}dx^c + C^a{}_{bc}(N^c{}_ddx^d + df_0^c)\right]\nonumber\\
&= \frac{1}{2}\left(\delta_i^k\delta_l^j - \eta^{jk}\eta_{il}\right)f^{-1}{}^l_a\,df_k^a + \frac{1}{2}\eta^{jk}f_i^bf_k^c(\delta_bg^F_{ac} - \delta_cg^F_{ab})dx^a\,.\label{eqn:connectionh}
\end{align}
Together with~\eqref{eqn:connectionz} this determines a $\mathfrak{g}$-valued 1-form \(A\) on \(P\).

We still need to show that \(A\) satisfies the conditions on a Cartan connection listed in section~\ref{subsec:cartan}. A quick and straightforward calculation, which we omit here for brevity, shows that conditions~\ref{cartan:equivariant} and~\ref{cartan:mcform} are indeed satisfied. The proof of condition~\ref{cartan:isomorphism} is less obvious. We therefore postpone this proof to the next section, where we will discuss the fundamental vector fields associated to the Cartan connection. Our construction of the fundamental vector fields will then yield the desired proof as a corollary.

\subsection{Fundamental vector fields}\label{subsec:fundvecfields}
In the preceding section we have constructed a $\mathfrak{g}$-valued 1-form \(A\) on Finsler observer space, which satisfies conditions~\ref{cartan:equivariant} and~\ref{cartan:mcform} of a Cartan connection. We will now show that it also satisfies condition~\ref{cartan:isomorphism}, i.e., that the restriction \(A_{(x,f)}: T_{(x,f)}P \to \mathfrak{g}\) is a linear isomorphism for all~\({(x,f) \in P}\). For this purpose we will explicitly construct the inverse maps \(\underline{A}_{(x,f)}: \mathfrak{g} \to T_{(x,f)}P\). This corresponds to specifying a map \(\underline{A}: \mathfrak{g} \to \Gamma(TP)\) that assigns to each element of the Lie algebra \(\mathfrak{g}\) a vector field on \(P\). These vector fields are called the fundamental vector fields of the Cartan connection \(A\). For \(h \in \mathfrak{h}\) and \(z \in \mathfrak{z}\) they are the unique vector fields that satisfy
\begin{subequations}
\begin{align}
\omega(\underline{A}(h)) &= h\,, & e(\underline{A}(h)) &= 0\,,\\
\omega(\underline{A}(z)) &= 0\,, & e(\underline{A}(z)) &= z\,.
\end{align}
\end{subequations}
We can solve these equations for \(\underline{A}(h)\) and \(\underline{A}(z)\). Using the expressions~\eqref{eqn:connectionz} and~\eqref{eqn:connectionh} for \(e\) and \(\omega\) we find the component expressions
\begin{subequations}\label{eqn:fundvecfields}
\begin{align}
\underline{A}(h) &= \left(h^i{}_jf_i^a - h^i{}_0f^b_if^c_jC^a{}_{bc}\right)\bar{\partial}^j_a\,,\label{eqn:vecfieldh}\\
\underline{A}(z) &= z^if_i^a\left(\partial_a - f_j^bF^c{}_{ab}\bar{\partial}^j_c\right)\,,\label{eqn:vecfieldz}
\end{align}
\end{subequations}
where we introduced the notation
\begin{equation}
\partial_a = \frac{\partial}{\partial x^a}\,, \quad \bar{\partial}^i_a = \frac{\partial}{\partial f_i^a}
\end{equation}
for tangent vectors to the frame bundle \(P\). This proves that \(A_{(x,f)}\) possesses an inverse, in the sense that \(A_{(x,f)} \circ \underline{A}_{(x,f)} = \mathrm{id}_{\mathfrak{g}}\). Since \(T_{(x,f)}P\) and \(\mathfrak{g}\) are finite dimensional vector spaces of equal dimension, this is already sufficient to show that both \(A_{(x,f)}\) and \(\underline{A}_{(x,f)}\) are linear isomorphisms and that one is the inverse of the other. In thus follows that \(A\) indeed satisfies condition~\ref{cartan:isomorphism} and hence is a Cartan connection.

The map \(\underline{A}: \mathfrak{g} \to \Gamma(TP)\) assigns to each subspace of \(\mathfrak{g}\) a subbundle of \(TP\). In the following section we will apply this property to those subspaces of \(\mathfrak{g}\) which are invariant under the adjoint representation of \(K\) and discuss the associated subbundles.

\subsection{Split of the tangent bundle}\label{subsec:tangbundsplit}
Using the results from the previous two sections we now have an isomorphic mapping of each tangent space \(T_{(x,f)}P\) to the Lie algebra \(\mathfrak{g}\). In this section we make use of this mapping and carry the structure of \(\mathfrak{g}\) over to \(TP\). In particular, we consider the decomposition~\eqref{eqn:algsplit2} of the adjoint representation of \(K \subset G\) on \(\mathfrak{g}\) into a direct sum of irreducible subrepresentations. It induces a split of the Cartan connection \(A \in \Omega^1(P,\mathfrak{g})\) into 1-forms \(\Omega \in \Omega^1(P,\mathfrak{k})\), \(b \in \Omega^1(P,\mathfrak{y})\), \(\vec{e} \in \Omega^1(P,\vec{\mathfrak{z}})\) and \(e^0 \in \Omega^1(P,\mathfrak{z}^0)\). Each of these components isomorphically maps a subspace of \(T_{(x,f)}P\) to the corresponding subspace of \(\mathfrak{g}\), as shown in the following diagram:
\begin{equation}\label{eqn:tangbundsplit2}
\xymatrix{R_{(x,f)}P \ar@{^(->>}[d]_{\Omega} \ar@{}[r]|{\oplus} \ar@{}[dr]|{+} & B_{(x,f)}P \ar@{^(->>}[d]_b \ar@{}[r]|{\oplus} \ar@{}[dr]|{+} & \vec{H}_{(x,f)}P \ar@{^(->>}[d]_{\vec{e}} \ar@{}[r]|{\oplus} \ar@{}[dr]|{+} & H^0_{(x,f)}P \ar@{^(->>}[d]_{e^0} \ar@{}[r]|{=} \ar@{}[dr]|{=} & T_{(x,f)}P \ar@{^(->>}[d]_A\\
\mathfrak{k} \ar@{}[r]|{\oplus} & \mathfrak{y} \ar@{}[r]|{\oplus} & \vec{\mathfrak{z}} \ar@{}[r]|{\oplus} & \mathfrak{z}^0 \ar@{}[r]|{=} & \mathfrak{g}}
\end{equation}
This induces a decomposition of the tangent bundle \(TP\) into four subbundles, which are spanned by the fundamental vector fields \(\underline{A}(\mathfrak{k})\), \(\underline{A}(\mathfrak{y})\), \(\underline{A}(\vec{\mathfrak{z}})\) and \(\underline{A}(\mathfrak{z}^0)\). These subbundles have a simple geometric interpretation: the elements of \(RP\), \(BP\), \(\vec{H}P\) and \(H^0P\) correspond to infinitesimal rotations, boosts and spatial and temporal translations of an observer frame \((x,f)\).

Recall from section~\ref{subsec:observerspace} that the tangent bundle \(TO\) of observer space splits in a similar fashion into boosts and translations. Indeed the splits~\eqref{eqn:tangbundsplit} and~\eqref{eqn:tangbundsplit2} of \(TO\) and \(TP\) are closely related. This can be seen by calculating the pushforward of the fundamental vector fields, evaluated at~\({(x,f) \in P}\), along \(\pi\). For \(h \in \mathfrak{h}\), \(z \in \mathfrak{z}\) one obtains
\begin{equation}\label{eqn:vectorpush}
\pi_*(\underline{A}(h)) = h^{\alpha}{}_0f_{\alpha}^a\bar{\partial}_a\,, \qquad \pi_*(\underline{A}(z)) = z^if_i^a\delta_a\,.
\end{equation}
Comparing this with the decomposition~\eqref{eqn:decomposition} one can immediately see that \(\pi_* \circ \underline{A}\) vanishes on the subspace \(\mathfrak{k}\) of rotations, where only the spatial components \(h^{\alpha}{}_{\beta}\) are non-zero. Thus, \(\pi_*\) vanishes on the rotation subbundle \(RP\) spanned by the vector fields \(\underline{A}(\mathfrak{k})\). Similarly, one can read off that \(\pi_*\) isomorphically maps each subspace \(B_{(x,f)}P\) to \(V_{(x,f_0)}O\), and analogously for the spaces constituting the subbundles \(HP\) and \(HO\). This can conveniently be summarized using the following diagram:
\begin{equation}
\xymatrix{R_{(x,f)}P \ar[d]_{\pi_*} \ar@{}[r]|{\oplus} & B_{(x,f)}P \ar@{^(->>}[d]_{\pi_*} \ar@{}[r]|{\oplus} & \vec{H}_{(x,f)}P \ar@{^(->>}[d]_{\pi_*} \ar@{}[r]|{\oplus} & H^0_{(x,f)}P \ar@{^(->>}[d]_{\pi_*} \ar@{}[r]|{=} & T_{(x,f)}P \ar[d]_{\pi_*}\\
0 & V_{(x,f_0)}O \ar@{}[r]|{\oplus} & \vec{H}_{(x,f_0)}O \ar@{}[r]|{\oplus} & H^0_{(x,f_0)}O \ar@{}[r]|{=} & T_{(x,f_0)}O}
\end{equation}
The one-dimensional subbundle \(H^0P\), whose elements are mapped into the one-dimensional subbundle \(H^0O\), is of particular interest. We have seen already in section~\ref{subsec:observerspace} that \(H^0O\) is spanned by a vector field \(\mathbf{r}\) whose integral curves describe freely falling observers. In the next section we will show that \(H^0P\) plays a similar role in the Cartan geometric description of observer space.

\subsection{Time translation}\label{subsec:timetranslation}
In the previous section we have shown that the tangent bundle \(TP\) of \(P\) decomposes into several subbundles, and that this split is closely related to the split of the tangent bundle \(TO\) shown in section~\ref{subsec:observerspace}. Here we turn our attention to the one-dimensional subbundle \(H^0P\). It is spanned by the fundamental vector field
\begin{equation}\label{eqn:timevectorfield}
\mathbf{t} = \underline{A}(\mathcal{Z}_0) = f_0^a\partial_a - f_j^aN^b{}_a\bar{\partial}^j_b
\end{equation}
associated to the generator \(\mathcal{Z}_0\) of time translation. It is the unique vector field that satisfies the conditions \(e^i(\mathbf{t}) = \delta^i_0\) and \(\omega^i{}_j(\mathbf{t}) = 0\). The former condition, together with~\eqref{eqn:connectionz}, implies that an integral curve \(\tau \mapsto (x(\tau), f(\tau))\) of \(\mathbf{t}\) satisfies
\begin{equation}\label{eqn:canonicallift}
\dot{x}^a(\tau) = f_0^a(\tau)\,,
\end{equation}
i.e., the tangent vector \(\dot{x}(\tau)\) of the projected curve~\({\tau \mapsto x(\tau)}\) on \(M\) agrees with the temporal component \(f_0(\tau)\). Curves of this type are canonical lifts of trajectories from \(M\) to \(O\) and describe the time evolution of observers on Finsler spacetimes. The second condition, together with the defining property~\eqref{eqn:covderivative2} of \(\omega\), implies that the frame \(f\) is parallely transported along the projected curve \(\tau \mapsto (x(\tau), f_0(\tau))\) on \(O\). The flow of \(\mathbf{t}\) therefore can be interpreted as the time evolution of freely falling observer frames.

Note further that the projected curves \(\tau \mapsto (x(\tau), f_0(\tau))\) on \(O\) that satisfy~\eqref{eqn:canonicallift} are exactly the integral curves of the Reeb vector field \(\mathbf{r}\) defined in section~\ref{subsec:observerspace}. This reflects the fact that \(\mathbf{r}\) is the projection of \(\mathbf{t}\) along \(\pi\), i.e.,
\begin{equation}\label{eqn:timefields}
\mathbf{r} \circ \pi = \pi_* \circ \mathbf{t}\,.
\end{equation}
This will become important when we construct a Finsler spacetime out of a Cartan geometry in section~\ref{sec:cartanfinsler}.

\subsection{Curvature of the Cartan connection}\label{subsec:curvature}
In this final section we will complete our discussion of the Cartan geometry of Finsler observer spaces by calculating the curvature \(F \in \Omega^2(P,\mathfrak{g})\) of the Cartan connection, which measures the deviation of a Cartan geometry from its model Klein geometry and is given by
\begin{equation}
F = dA + \frac{1}{2}[A,A]\,.
\end{equation}
Again we make use of the split \(\mathfrak{g} = \mathfrak{h} \oplus \mathfrak{z}\) and decompose \(F\) into a $\mathfrak{h}$-valued part \(F_\mathfrak{h}\) and a $\mathfrak{z}$-valued part \(F_\mathfrak{z}\). Using this decomposition and the corresponding grading of \(\mathfrak{g}\), which follows from the algebra relations~\eqref{eqn:algebra}, we find
\begin{subequations}
\begin{align}
F_\mathfrak{h} &= d\omega + \frac{1}{2}[\omega,\omega] + \frac{1}{2}[e,e]\,,\label{eqn:curvatureh1}\\
F_\mathfrak{z} &= de + [\omega,e]\,.\label{eqn:curvaturez1}
\end{align}
\end{subequations}
Note the appearance of the term \([e,e]\) in the first equation, which explicitly depends on the choice of one of the groups \(G\) displayed in~\eqref{eqn:groups}, corresponding to the sign of the cosmological constant \(\Lambda\) on the model Klein geometry \(G/K\). This becomes apparent when we use~\eqref{eqn:component} to expand the curvature into component notation and obtain
\begin{subequations}
\begin{align}
F_{\mathfrak{h}}{}^j{}_i &= d\omega^j{}_i + \omega^j{}_k \wedge \omega^k{}_i + \sgn\Lambda\,e^j \wedge e^k\eta_{ki}\,,\label{eqn:curvatureh2}\\
F_{\mathfrak{z}}^i &= de^i + \omega^i{}_j \wedge e^j\,.\label{eqn:curvaturez2}
\end{align}
\end{subequations}
Inserting the Cartan connection~\eqref{eqn:connectionz} and~\eqref{eqn:connectionh} we finally find
\begin{subequations}
\begin{align}
F_{\mathfrak{h}}{}^j{}_i &= -\frac{1}{2}f^{-1}{}^j_df_i^c\left(R^d{}_{cab}dx^a \wedge dx^b + 2P^d{}_{cab}dx^a \wedge \delta f_0^b + S^d{}_{cab}\delta f_0^a \wedge \delta f_0^b\right)\nonumber\\
&\phantom{=} + f^{-1}{}^j_af^{-1}{}^k_b\eta_{ik}\sgn\Lambda\,dx^a \wedge dx^b\,,\label{eqn:curvatureh3}\\
F_{\mathfrak{z}}^i &= -f^{-1}{}^i_aC^a{}_{bc}dx^b \wedge \delta f_0^c\,,\label{eqn:curvaturez3}
\end{align}
\end{subequations}
with \(\delta f_0^a = df_0^a + N^a{}_bdx^b\). The coefficients \(R^d{}_{cab}\), \(P^d{}_{cab}\) and \(S^d{}_{cab}\) introduced here are the curvature coefficients of the Cartan linear connection, which are given by~\cite{Bucataru}
\begin{subequations}
\begin{align}
R^d{}_{cab} &= \delta_bF^d{}_{ca} - \delta_aF^d{}_{cb} + F^e{}_{ca}F^d{}_{eb} - F^e{}_{cb}F^d{}_{ea} + C^d{}_{ce}(\delta_bN^e{}_a - \delta_aN^e{}_b)\,,\\
P^d{}_{cab} &= \bar{\partial}_bF^d{}_{ca} - \delta_aC^d{}_{cb} + F^e{}_{ca}C^d{}_{eb} - C^e{}_{cb}F^d{}_{ea} + C^d{}_{ce}\bar{\partial}_bN^e{}_a\,,\\
S^d{}_{cab} &= \bar{\partial}_bC^d{}_{ca} - \bar{\partial}_aC^d{}_{cb} + C^e{}_{ca}C^d{}_{eb} - C^e{}_{cb}C^d{}_{ea}\,.
\end{align}
\end{subequations}
The curvature, in the sense of Cartan geometry, of the Cartan connection \(A\) hence resembles the curvature, in the sense of Finsler geometry, of the Cartan linear connection we used in the definition of \(\omega\). We further find a term in \(F_{\mathfrak{h}}\) which explicitly depends on the sign of the cosmological constant. This term compensates for the intrinsic curvature of the underlying model Klein geometry \(G/K\).

This concludes our discussion of the Cartan geometry of Finsler observer spaces. We have explicitly constructed the Cartan connection and calculated some of its properties. In the following section~\ref{sec:cartanfinsler} we will see which of these properties become important when we ask ourselves the converse question and aim to construct a Finsler geometry out of an arbitrary observer space Cartan geometry.

\section{Finsler geometry of Cartan observer space}\label{sec:cartanfinsler}
In the preceding section~\ref{sec:finslercartan} we have shown by explicit construction that a Finsler spacetime~\((M,F)\) gives rise to a Cartan geometry on its observer space. We will now consider the inverse direction, i.e., we start with an arbitrary observer space Cartan geometry \((\pi: P \to O, A)\) and aim to construct a Finsler spacetime. We will show by explicit construction that this is possible only if the observer space Cartan geometry satisfies certain additional conditions. Our construction will proceed in three steps. First, we will construct the spacetime manifold \(M\) in section~\ref{subsec:spacetime}. We will then show in section~\ref{subsec:observers} how observers can be related to tangent vectors of their trajectories on \(M\). Finally, we will construct the Finsler function \(F\) and hence also the Finsler metric \(g^F\) in section~\ref{subsec:finslermetric}.

\subsection{Spacetime}\label{subsec:spacetime}
Given an arbitrary observer space Cartan geometry there is no a priori notion of spacetime or the spatial and temporal location of an observer. Our construction hence must yield both a spacetime manifold \(M\) and a surjection \(\pi': O \to M\) assigning locations to observers. Constructing these two objects corresponds to constructing an equivalence relation on \(O\), with \(M\) being the set of equivalence classes and \(\pi'\) being the canonical projection. Recall that for the observer space of a Finsler spacetime as shown in section~\ref{subsec:observerspace} these equivalence classes are given by three-dimensional submanifolds \(S_x\) of observers being at the same location \(x \in M\). We wish to keep this structure also for more general Cartan geometries and decompose \(O\) into three-dimensional submanifolds. We therefore need a description of the sets \(S_x\) in terms of the Cartan connection only.

Finsler geometry allows a global definition of the submanifolds \(S_x\) as connected components of level sets of the Finsler function \(F\) (or, equivalently, the auxiliary function \(L\)). This definition is given by condition~\ref{finsler:timelike} on Finsler spacetimes. In a purely connection-based approach such as Cartan geometry we do not have this possibility, since the Cartan connection \(A\) describes only the local geometry of observer space. This means that we need to describe the geometry of the submanifolds \(S_x\) by their tangent spaces. Recall from section~\ref{subsec:observerspace} that the tangent space \(T_yS_x\) at~\({y \in S_x}\) agrees with the vertical tangent space \(V_{(x,y)}O \subset T_{(x,y)}O\). Conversely, the sets \(S_x\) are the integral submanifolds of the vertical distribution \(VO\). We have seen in section~\ref{subsec:tangbundsplit} that \(VO\) is the projection of the boost distribution \(BP\) on \(P\) along the bundle map \(\pi\), and is thus defined in terms of the Cartan connection only. However, for a general observer space Cartan geometry the vertical distribution \(VO\) may be non-integrable, i.e., it may not coincide with the tangent spaces of integral submanifolds. We therefore impose:

\begin{cond}\label{cond:integrable}
The vertical distribution \(VO\) on \(O\) must be integrable.
\end{cond}

Given an integrable distribution \(VO\) on \(O\), it follows from Frobenius' theorem that there exists a foliation \(\mathcal{F}\) of \(O\) such that the leaves of \(\mathcal{F}\) are integral submanifolds of \(VO\). We thus define~\(M\) to be the leaf space of \(\mathcal{F}\) and \(\pi'\) as the canonical projection onto \(M\). If \(O\) is the observer space of a Finsler spacetime, this indeed yields the underlying spacetime manifold. However, while Frobenius' theorem guarantees the existence of a foliation \(\mathcal{F}\), it does not make any statements about the structure of its leaf space \(M\). In particular, it does not guarantee that \(M\) admits the structure of a smooth manifold if we start from an arbitrary observer space with integrable vertical distribution. Since the smooth manifold structure is a necessary ingredient for the interpretation of \(M\) as spacetime, we further impose:

\begin{cond}\label{cond:strictlysimple}
The foliation \(\mathcal{F}\) associated to the vertical distribution \(VO\) must be strictly simple.
\end{cond}

A foliation is strictly simple if and only if its leaf space \(M\) is a smooth Hausdorff manifold and the projection \(\pi'\) is a surjective submersion. The latter in particular implies that for each \(o \in O\) the differential \(\pi'_{*o}: T_oO \to T_{\pi'(o)}M\) is a surjection. Since the leaves of \(\mathcal{F}\) are integral submanifolds of \(VO\) it follows further that \(\ker\pi'_{*o} = V_oO\). From the decomposition \(T_oO = V_oO \oplus H_oO\) it then follows that \(\pi'_{*o}\) bijectively maps \(H_oO\) to \(T_{\pi'(o)}M\). This will become relevant for our constructions of observer trajectories and the Finsler metric in the following two sections.

\subsection{Observer trajectories}\label{subsec:observers}
On a Finsler spacetime \((M,F)\) observers follow trajectories \(\tau \mapsto x(\tau)\) whose tangent vectors~\({\dot{x}(\tau)}\) at each point are future unit timelike. Hence observer trajectories can canonically be lifted to trajectories \(\tau \mapsto (x(\tau),\dot{x}(\tau))\) on observer space \(O\). Conversely, trajectories \(\tau \mapsto o(\tau)\) on \(O\) can be projected to trajectories \(\tau \mapsto \pi'(o(\tau))\) on \(M\). The projection from \(O\) to \(M\) can obviously be generalized to arbitrary observer spaces that satisfy the conditions derived in the previous section, and thus admit the construction of an underlying spacetime together with the projection \(\pi'\). In this section we discuss whether also the canonical lift from \(M\) to \(O\) can be generalized. This will provide us with a relation between observers \(o \in O\) and their four-velocities on \(M\).

We have seen in section~\ref{subsec:observerspace} that the observer space of a Finsler spacetime is a seven-dimensional submanifold of its tangent space. Our reconstruction of a Finsler spacetime from an observer space therefore needs to provide us with an embedding \(\sigma: O \to TM\). In the preceding section the construction of the spacetime manifold already yielded a projection \(\pi': O \to M\) that assigns a spacetime point, or location, to each observer. Compatibility of the embedding \(\sigma\) with the projection \(\pi'\) requires that \(\sigma(o) \in T_{\pi'(o)}M\) for all \(o \in O\), i.e., \(\sigma(o)\) must be a tangent vector at the observer location \(\pi'(o)\).

Let \(\tau \mapsto o(\tau)\) be an observer trajectory on \(O\). Its embedding \(\tau \mapsto \sigma(o(\tau))\) into \(TM\) is a canonical lift of its projection \(\tau \mapsto \pi'(o(\tau))\) if and only if
\begin{equation}\label{eqn:liftcondition}
\sigma(o(\tau)) = \frac{d}{d\tau}\pi'(o(\tau)) = \pi'_*(\dot{o}(\tau))\,.
\end{equation}
In section~\ref{subsec:timetranslation} we have discussed a particular class of trajectories on \(O\) which are given by the integral curves of the Reeb vector field \(\mathbf{r}\) and correspond to freely falling observers. We have seen that in case of a Finsler spacetime these are precisely the canonical lifts of timelike Finsler geodesics on \(M\). The requirement that the integral curves of \(\mathbf{r}\) are canonical lifts together with~\eqref{eqn:liftcondition} uniquely determines the map \(\sigma\) to be
\begin{equation}\label{eqn:obsembedding}
\sigma = \pi'_* \circ \mathbf{r}\,.
\end{equation}
However, for a general observer space Cartan geometry this map \(\sigma\) may not be an embedding. We therefore have to impose as a further condition:

\begin{cond}\label{cond:embedding}
The map \(\sigma: O \to TM\) must be an embedding.
\end{cond}

Using our construction of the map \(\sigma\) shown above we can now answer the question posed at the beginning of this section which trajectories on \(O\) correspond to canonical lifts of trajectories on the underlying spacetime manifold \(M\). From~\eqref{eqn:liftcondition} and~\eqref{eqn:obsembedding} it follows that their tangent vectors~\({\dot{o}(\tau)}\) must satisfy
\begin{equation}
\pi'_*(\dot{o}(\tau) - \mathbf{r}(o(\tau))) = 0\,.
\end{equation}
In the preceding section we have seen that the kernel of \(\pi'_*\) is given by the vertical tangent bundle~\(VO\). Thus, an observer trajectory is a canonical lift if and only if the horizontal part of its tangent vectors is given by the Reeb vector field \(\mathbf{r}\). Note that this property is defined only by the split of the tangent bundle \(TO\) into horizontal and vertical subbundles and the Reeb vector field, both of which exist on arbitrary observer space Cartan geometries, even if there is no underlying spacetime.

\subsection{The Finsler metric}\label{subsec:finslermetric}
In the previous two sections we have constructed a spacetime manifold \(M\) and an embedding~\({\sigma: O \to TM}\) that allows an identification of a certain class of observer trajectories on \(O\) with canonical lifts of trajectories from \(M\) to \(TM\). In the final part of our construction displayed in this section we aim to interpret the tangent vectors \(\sigma(o)\) as four-velocities of observers and to construct orthonormal observer frames on \(M\) whose time component agrees with \(\sigma(o)\). For this purpose we need to find a suitable notion of orthonormality on the tangent bundle \(TM\) with respect to which the tangent vectors to observer curves are future unit timelike. This notion will be provided by a Finsler metric \(g^F\), which can be derived from a Finsler function \(F\). In the following we will construct the Finsler function and show that we must impose further conditions on the initial observer space Cartan geometry.

Recall from property~\ref{finsler:fhomorev} in section~\ref{subsec:finsler} that the Finsler function \(F: TM \to \mathbb{R}\) is a positive, reversible, homogeneous function of degree one in the fiber coordinates. In section~\ref{subsec:observerspace} we have further seen that it takes the value \(F = 1\) on unit timelike vectors, and thus in particular on the observer space of a Finsler spacetime. This has a few consequences for the reconstruction of a Finsler function from a more general Cartan geometry. In the preceding section we have constructed an embedding \(\sigma: O \to TM\) and identified the corresponding embedded submanifold \(\sigma(O)\) with the observer space of \(M\). It thus follows that the Finsler function on \(TM\) must satisfy \(F(\lambda\sigma(o)) = |\lambda|\) for all \(o \in O\) and \(\lambda \in \mathbb{R}\). This is possible only if \(\sigma(O)\) intersects each ray \(\mathbb{R}^+(x,y) = \{(x,\lambda y), \lambda > 0\}\) at most once. Further, since we wish to identify observers with future oriented vectors, \(\sigma(O)\) must not intersect both a ray and its opposite. We therefore impose:

\begin{cond}\label{cond:intersections}
Each line \(\mathbb{R}(x,y) = \{(x,\lambda y), \lambda \in \mathbb{R}\}\) must intersect the embedded submanifold \(\sigma(O)\) of \(TM\) at most once.
\end{cond}

This condition guarantees the existence of both a Finsler function and a corresponding Finsler metric \(g^F_{ab} = \bar{\partial}_a\bar{\partial}_bF^2/2\) on observer space and on the double cone \(\mathbb{R}\sigma(O)\) obtained from rescaling~\(\sigma(O)\). We do not obtain any information about parts of \(TM\) that lie outside of this double cone, especially not about spacelike or lightlike vectors. This should not be too surprising since the Cartan geometry we started from only describes the geometry of observer space.

Our construction does not guarantee that the Finsler metric is non-degenerate and of Lorentz signature. The signature may even be different in different regions of observer space and degenerate on the boundaries between them. To exclude these cases we finally impose:

\begin{cond}\label{cond:signature}
The Finsler metric \(g^F\) associated to the Finsler function constructed above is non-degenerate and of Lorentz signature on all of observer space.
\end{cond}

Summarizing our findings, we have constructed a smooth spacetime manifold \(M\), identified its observer space with an embedded submanifold \(\sigma(O) \subset TM\) and equipped the double cone \(\mathbb{R}\sigma(O)\) with a Finsler function \(F\), such that the Finsler metric \(g^F\) has Lorentz signature and the observer four-velocities in \(\sigma(O)\) are future unit timelike. We have thus obtained a Finsler geometry on \(M\), as far as it can be measured by physical observers following timelike trajectories. Our construction does not yield a Finsler function on the spacelike vectors in \(TM \setminus \sigma(O)\) or the null structure of~\({(M,F)}\). This stems from the fact that the initial Cartan geometry \((\pi: P \to O,A)\) describes only the geometry of observer space and does not contain any information on non-timelike vectors.

Together with our results from section~\ref{sec:finslercartan} we now have prescriptions to transform a Finsler geometry to a Cartan geometry on its observer space and vice versa. This leads to the question whether these two prescriptions are compatible in the sense that applying one after the other yields back the initial geometry. This will be discussed in the following section.

\section{Closing the circle}\label{sec:reconstruction}
In the previous two sections we have developed two constructions that allow us to derive an observer space Cartan geometry from a Finsler spacetime and vice versa, provided that certain conditions are satisfied. In this section we will chain these two constructions together and discuss whether they are compatible, i.e., whether one of them is the inverse of the other. This would establish a correspondence between Finsler spacetimes and a class of observer space Cartan geometries. We will approach this question from two sides. In section~\ref{subsec:recfinsler} we will start from a Finsler spacetime, calculate its observer space Cartan geometry as shown in section~\ref{sec:finslercartan} and aim to reconstruct the original Finsler geometry using the method from section~\ref{sec:cartanfinsler}. The converse way will be discussed in section~\ref{subsec:reccartan}. We start with an observer space Cartan geometry satisfying the conditions derived in section~\ref{sec:cartanfinsler}, derive a Finsler spacetime and show whether its observer space Cartan geometry constructed using the steps from section~\ref{sec:finslercartan} agrees with the one we started from.

\subsection{Reconstruction of a given Finsler spacetime}\label{subsec:recfinsler}
In section~\ref{sec:cartanfinsler} we have developed a procedure that allows us to derive a Finsler spacetime from an observer space Cartan geometry \((\pi: P \to O,A)\), provided that the latter satisfies a few simple conditions. We have not given any examples for Cartan geometries yet that satisfy these conditions. However, it seems natural to assume that they are satisfied for Cartan geometries that arise as the observer space of a given Finsler spacetime \((M,F)\) as displayed in section~\ref{sec:finslercartan}. In this section we will show that this is indeed the case, so that we can follow the steps detailed in section~\ref{sec:cartanfinsler} to derive another Finsler spacetime \((\hat{M},\hat{F})\). It will turn out that we can partially reconstruct the original Finsler spacetime \((M,F)\) we started from.

The construction displayed in this section consists of two parts. We show that the observer space Cartan geometry \((\pi: P \to O,A)\) obtained from a Finsler spacetime \((M,F)\) satisfies the conditions~\ref{cond:integrable} to~\ref{cond:signature} derived in section~\ref{sec:cartanfinsler}. From this it follows that we can construct a Finsler spacetime \((\hat{M},\hat{F})\) together with an embedding \(\sigma: O \to T\hat{M}\) of the observer space into \(T\hat{M}\). We further show the existence of a diffeomorphism \(\mu: M \to \hat{M}\) which preserves the observer space and its Finsler geometry, i.e., which satisfies
\begin{equation}
\mu_*(o) = \sigma(o)\,, \quad \hat{F}(\lambda\mu_*(o)) = F(\lambda o) = |\lambda|
\end{equation}
for all \(\lambda \in \mathbb{R}\) and \(o \in O\).

Let \((M,F)\) be a Finsler spacetime and \((\pi: P \to O,A)\) its observer space Cartan geometry. We start by following the steps shown in section~\ref{subsec:spacetime} to construct the spacetime manifold \(\hat{M}\). For this purpose we first need to show that condition~\ref{cond:integrable} is satisfied, i.e., that the vertical distribution \(VO\) on \(O\) is integrable. This can immediately be seen from the fact that the vertical tangent spaces on \(O\) agree with the tangent spaces to the submanifolds \(S_x\) with \(x \in M\) defined in property~\ref{finsler:timelike} of Finsler spacetimes. Thus, \(O\) is naturally foliated by the leaves \(S_x\). We denote the corresponding leaf space by
\begin{equation}
\hat{M} = \{S_x, x \in M\}\,.
\end{equation}
The canonical projection will be denoted by \(\hat{\pi}': O \to \hat{M}\). One can easily read off that there is a bijective map \(\mu: M \to \hat{M}\), which we use to carry the smooth Hausdorff manifold structure from~\(M\) to \(\hat{M}\). This turns \(\mu\) into a diffeomorphism. Note that \(\hat{\pi}' = \mu \circ \pi'\), where \(\pi': O \to M\) is the usual projection. The latter is a surjective submersion since its differential \(\pi'_*\) bijectively maps each horizontal tangent space \(H_oO\) for \(o \in O\) to the corresponding tangent space \(T_{\pi'(o)}M\). It thus follows that also \(\hat{\pi}'\) is a surjective submersion. This proves that condition~\ref{cond:strictlysimple} is satisfied.

In order to show that also the remaining conditions~\ref{cond:embedding} to~\ref{cond:signature} are satisfied, consider the following diagram:
\begin{equation}\label{eqn:recfinsler}
\xymatrix{TM \ar[dd]_{\mu_*} \ar[rrr] & & & M \ar[dd]^{\mu}\\
& TO \ar[ul]^{\pi'_*} \ar[dl]_{\hat{\pi}'_*} & O \ar[l]^{\mathbf{r}} \ar@{_(->}[ull] \ar[dll]^{\sigma} \ar[ur]_{\pi'} \ar[dr]^{\hat{\pi}'} &\\
T\hat{M} \ar[rrr] & & & \hat{M}}
\end{equation}
It is not difficult to check that this diagram commutes. The triangle on the right hand side commutes by the construction of the map \(\mu\). Differentiating this triangle yields the triangle on the left hand side, which therefore also commutes. The small triangle below commutes by the definition~\eqref{eqn:obsembedding} of the map \(\sigma\). From the definition~\eqref{eqn:reebvectorfield} of the Reeb vector field it finally follows that also the upper small triangle commutes, where \(O \hookrightarrow TM\) denotes the canonical embedding following from the construction of the observer space of a Finsler spacetime detailed in section~\ref{subsec:observerspace}.

Since \(\mu\) is a diffeomorphism from \(M\) to \(\hat{M}\), its differential \(\mu_*\) maps the embedded submanifold~\({O \subset TM}\) to an embedded submanifold of \(T\hat{M}\). From the commutativity of the triangle with corners \(O\), \(TM\) and \(T\hat{M}\) it then follows that \(\sigma\) is an embedding, which proves condition~\ref{cond:embedding}. Since~\(\mu_*\) is a vector bundle isomorphism, and thus preserves the lines \(\mathbb{R}o\) for \(o \in TM\), it follows that also condition~\ref{cond:intersections} is satisfied. We can thus apply the construction from section~\ref{subsec:finslermetric} and equip the space~\({\mathbb{R}\sigma(O) \subset T\hat{M}}\) with a Finsler function \(\hat{F}\). From the commutativity of~\eqref{eqn:recfinsler} it further follows that \(F = \hat{F} \circ \mu_*\) on \(\mathbb{R}O \subset TM\), i.e., the Finsler function is preserved by the map \(\mu_*\). Hence, also the Finsler metric and in particular its signature are preserved, which proves condition~\ref{cond:signature}.

Our findings can be summarized as follows. Starting from a Finsler spacetime \((M,F)\) and applying the constructions from sections~\ref{sec:finslercartan} and~\ref{sec:cartanfinsler}, we obtain a spacetime manifold \(\hat{M}\) which is diffeomorphic to \(M\). The differential of the diffeomorphism \(\mu\) maps the observer space \(O \in TM\) of \(M\) to the observer space of \(\hat{M}\). On the observer space of \(\hat{M}\) we obtain a Finsler function, and hence a Finsler metric, which agrees with the Finsler metric on the observer space of \(M\). We have thus reconstructed the initial Finsler geometry on the timelike vectors of \(TM\), which correspond to physical observers following timelike trajectories. As already remarked in section~\ref{sec:cartanfinsler}, we cannot reconstruct the Finsler function on spacelike vectors of \(TM\) or the null structure of \((M,F)\), since these do not enter the construction of the observer space Cartan geometry.

We conclude this section with a remark on the reconstruction of the Finsler metric. A well-known result from Cartan geometry states that a Riemannian manifold \((M,g)\) gives rise to a Cartan geometry on the model \(G/H\) with \(G = \mathrm{ISO}_0(3,1)\) and \(H = \mathrm{SO}_0(3,1)\), and that the Riemannian metric \(g\) can be reconstructed from the Cartan geometry only up to a global scale factor~\cite{Sharpe}. This freedom to choose a scale factor corresponds to the rescaling freedom of the generators \(\mathcal{Z}_i\) of translation, which does not change the algebra relations~\eqref{eqn:algebra} for \(\Lambda = 0\). The same rescaling freedom applies to our construction shown above. The choice of a different scale factor leads to a rescaling of the Reeb vector field \(\mathbf{r}\), the embedding \(\sigma\), the Finsler function \(\hat{F}\) and thus finally the Finsler metric \(\hat{g}^F\). Note that for \(\Lambda \neq 0\) the scale factor is fixed by the algebra relations~\eqref{eqn:algebra}.

\subsection{Reconstruction of a given Cartan observer space}\label{subsec:reccartan}
In the preceding section we have seen that if we start from a Finsler spacetime and construct its observer space Cartan geometry, we can partially reconstruct the initial Finsler geometry. We now discuss the inverse problem and start with an observer space Cartan geometry~\({(\pi: P \to O,A)}\) that satisfies conditions~\ref{cond:integrable} to~\ref{cond:signature} from section~\ref{sec:cartanfinsler}. Following the construction displayed in the same section we derive a Finsler geometry \((M,F)\). We then construct its observer space Cartan geometry~\({(\hat{\pi}: \hat{P} \to \hat{O},\hat{A})}\) as shown in section~\ref{sec:finslercartan} and pose the question in which cases we re-obtain the initial Cartan geometry. We will answer this question by deriving a simple condition on the initial Cartan geometry, which allows us to identify the Cartan geometries which can arise as observer space geometries of Finsler spacetimes.

In order to show that the Cartan geometries \((\pi: P \to O,A)\) and \((\hat{\pi}: \hat{P} \to \hat{O},\hat{A})\) agree, we need to find a diffeomorphism \(\chi: P \to \hat{P}\) that satisfies the following two conditions:
\begin{enumerate}[(i)]
\item
The diagram
\begin{equation}\label{eqn:bundleiso}
\xymatrix{P \ar[d]_{\chi} \ar[r]^{\pi} & O \ar[d]_{\sigma} \ar[r]^{\pi'} & M\\
\hat{P} \ar[r]_{\hat{\pi}} & \hat{O} \ar[ur]_{\hat{\pi}'}}
\end{equation}
commutes. The square on the left hand side of the diagram simply states that \(\chi\) is a bundle morphism between two principal $K$-bundles. The triangle on the right hand side commutes due to the construction of the embedding \(\sigma\) and is displayed here for later use.
\item
The Cartan connection \(A\) agrees with the pullback of the Cartan connection \(\hat{A}\) along \(\chi\). This condition can equivalently be formulated in terms of the fundamental vector fields: for each~\({a \in \mathfrak{g}}\), the vector field \(\underline{\hat{A}}(a)\) on \(\hat{P}\) is the pushforward of the corresponding vector field~\(\underline{A}(a)\) on \(P\) along \(\chi\), i.e., the diagram
\begin{equation}\label{eqn:fundvectpush}
\xymatrix{P \ar[d]_{\chi} \ar[r]^{\underline{A}(a)} & TP \ar[d]^{\chi_*}\\
\hat{P} \ar[r]_{\underline{\hat{A}}(a)} & T\hat{P}}
\end{equation}
commutes.
\end{enumerate}
We will call a map satisfying these conditions a \emph{Cartan morphism}. In order to prove the existence or non-existence a Cartan morphism, the following proposition will be helpful:

\begin{prop}
If a Cartan morphism \(\chi = (x,f): P \to \hat{P}\) for a given initial Cartan geometry~\({(\pi: P \to O,A)}\) and the reconstructed Cartan geometry \((\hat{\pi}: \hat{P} \to \hat{O},\hat{A})\) exists, its constituent functions \(x: P \to M\) and \(f_i: P \to TM\) are given by
\begin{equation}\label{eqn:cartanmorphism}
x(p) = \pi'(\pi(p))\,, \quad f_i(p) = \pi'_*(\pi_*(\underline{A}(\mathcal{Z}_i)(p)))\,.
\end{equation}
\end{prop}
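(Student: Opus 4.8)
The plan is to show that the two defining conditions on a Cartan morphism $\chi = (x,f): P \to \hat{P}$ directly pin down the component functions $x$ and $f_i$ in terms of data that is already available. First I would recall that $\hat P$, being the observer frame bundle of $(M,\hat F)$ constructed in section~\ref{sec:finslercartan}, has a concrete description: a point of $\hat P$ is a pair $(\hat x, \hat f)$ where $\hat x \in M$ (here I identify $\hat M$ with $M$ via the diffeomorphism $\mu$, or simply work with $\hat M$) and $\hat f = (\hat f_i)$ is an oriented, time-oriented frame at $\hat x$ orthonormal for the reconstructed Finsler metric. So writing $\chi(p) = (x(p), f(p))$ just means $x(p)$ is the base point of $\chi(p)$ in $M$ and $f_i(p)$ are its frame components, viewed as elements of $T_{x(p)}M \subset TM$. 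The task is to evaluate these two pieces using conditions~(i) and~(ii).

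For $x(p)$, I would simply chase the commuting diagram~\eqref{eqn:bundleiso}. The base point of $\chi(p)$ is $\hat\pi'(\hat\pi(\chi(p)))$. By commutativity of the left square, $\hat\pi(\chi(p)) = \sigma(\pi(p))$, and by commutativity of the right triangle, $\hat\pi'(\sigma(\pi(p))) = \pi'(\pi(p))$. Hence $x(p) = \pi'(\pi(p))$, which is the first formula.

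For $f_i(p)$, the idea is to identify the frame $\hat f = \chi(p)$ on $\hat P$ using the reconstructed Cartan connection $\hat A$. On the reconstructed observer space Cartan geometry, the fundamental vector field $\underline{\hat A}(\mathcal Z_i)$ is, by the explicit construction~\eqref{eqn:vecfieldz} in section~\ref{subsec:fundvecfields}, the vector field whose pushforward under $\hat\pi$ is $\hat f_i^a \hat\delta_a$ (cf.~\eqref{eqn:vectorpush}, the $z$-case), i.e.\ the horizontal lift of the frame vector $\hat f_i$; pushing that further down to $M$ via $\hat\pi'_*$ recovers $\hat f_i$ itself as a tangent vector at $\hat x$, because $\hat\pi'_*$ maps $H\hat O$ isomorphically onto $TM$ and sends the Reeb-type horizontal lift of a vector back to that vector (this is the content of~\eqref{eqn:obsembedding} applied componentwise, or directly of the relation between $\delta_a$, $\mathbf{r}$ and $\hat\pi'_*$). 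Concretely: $\hat\pi'_*(\hat\pi_*(\underline{\hat A}(\mathcal Z_i)(\hat f))) = \hat f_i$, the $i$-th frame vector of $\hat f$. Now invoke condition~(ii) with $a = \mathcal Z_i$: $\underline{\hat A}(\mathcal Z_i)$ is $\chi$-related to $\underline A(\mathcal Z_i)$, so $\underline{\hat A}(\mathcal Z_i)(\chi(p)) = \chi_*(\underline A(\mathcal Z_i)(p))$. Applying $\hat\pi \circ$ and then $\hat\pi' \circ$ and using the commuting diagram~\eqref{eqn:bundleiso} to replace $\hat\pi' \circ \hat\pi \circ \chi$ by $\pi' \circ \pi$ (at the level of pushforwards, since these compositions of maps agree), I get $f_i(p) = \hat f_i = \pi'_*(\pi_*(\underline A(\mathcal Z_i)(p)))$, which is the second formula.

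The main obstacle I anticipate is bookkeeping: making the passage from "$\hat\pi' \circ \hat\pi \circ \chi = \pi' \circ \pi$ as maps" to the corresponding statement about differentials applied to the specific vector $\underline A(\mathcal Z_i)(p)$ rigorous, and correctly handling the identification of $\hat M$ with $M$ (or keeping them distinct and carrying $\mu$ along). One must be careful that $\chi_*$ composed with the various projections' pushforwards really does land the fundamental vector field of $\hat A$ on the right fibre, and that the identification of $\underline{\hat A}(\mathcal Z_i)$ pushed all the way to $TM$ with the frame vector $\hat f_i$ is exactly~\eqref{eqn:vecfieldz}--\eqref{eqn:vectorpush} read in the reconstructed geometry. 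Everything else is a diagram chase; no genuinely hard analysis is involved, which is consistent with this being a "helpful proposition" rather than a deep result — it merely records that a Cartan morphism, if it exists, has no freedom in its definition, so that the subsequent existence question reduces to checking whether the canonically-defined $\chi$ of~\eqref{eqn:cartanmorphism} actually is a Cartan morphism.
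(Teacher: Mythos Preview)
Your proposal is correct and follows essentially the same route as the paper: both arguments obtain $x(p)$ by chasing the diagram~\eqref{eqn:bundleiso}, and both obtain $f_i(p)$ by combining the identity $\hat f_i = \hat\pi'_*(\hat\pi_*(\underline{\hat A}(\mathcal Z_i)))$ from~\eqref{eqn:fundvecfields}/\eqref{eqn:vectorpush} with the $\chi$-relatedness of fundamental vector fields~\eqref{eqn:fundvectpush} and the differential of~\eqref{eqn:bundleiso}. The paper packages the second step into a single combined commuting diagram, but the logical content is identical to what you describe.
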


\begin{proof}
Since the elements of \(\hat{P}\) are frames on \(M\), we can decompose \(\chi\) into maps~\({x: P \to M}\) and~\({f_i: P \to TM}\) assigning to \(p \in P\) the base point \(x(p) \in M\) and the constituent vectors~\({f_i(p) \in T_{x(p)}M}\) of the frame \(\chi(p)\). The base point \(x(p)\) is given by the canonical projection~\({\hat{\pi}' \circ \hat{\pi}: \hat{P} \to M}\) and can thus be written as
\begin{equation}
x(p) = \hat{\pi}'(\hat{\pi}(\chi(p))) = \pi'(\pi(p))\,,
\end{equation}
where the last equality follows from the commutativity of the diagram~\eqref{eqn:bundleiso}. Further, consider the diagram
\begin{equation}
\xymatrix{P \ar[d]_{\chi} \ar[r]^{\underline{A}(a)} & TP \ar[d]_{\chi_*} \ar[r]^{\pi_*} & TO \ar[d]_{\sigma_*} \ar[r]^{\pi'_*} & TM\\
\hat{P} \ar[r]_{\underline{\hat{A}}(a)} & T\hat{P} \ar[r]_{\hat{\pi}_*} & T\hat{O} \ar[ur]_{\hat{\pi}'_*}}
\end{equation}
for \(a = \mathcal{Z}_i \in \mathfrak{g}\). The square on the left commutes due to the commutativity of~\eqref{eqn:fundvectpush}, while the remainder of the diagram is simply the differential of~\eqref{eqn:bundleiso} and therefore also commutes. From this we find
\begin{equation}
f_i(p) = \hat{\pi}'_*(\hat{\pi}_*(\underline{\hat{A}}(\mathcal{Z}_i)(\chi(p)))) = \pi'_*(\pi_*(\underline{A}(\mathcal{Z}_i)(p)))\,,
\end{equation}
where the first equality holds due to the relation~\eqref{eqn:fundvecfields} between the frame components \(f_i(p)\) and the fundamental vector fields \(\underline{\hat{A}}(\mathcal{Z}_i)(\chi(p))\). This completes the proof.
\end{proof}

This proposition allows us to prove the existence or non-existence of a Cartan morphism \(\chi\), and thus the equivalence or non-equivalence of the initial and reconstructed Cartan geometries, by simply examining the properties of the maps~\eqref{eqn:cartanmorphism}. If these maps constitute orthonormal frames and the combined map \(\chi = (x,f)\) is a Cartan morphism, we have shown its existence by explicit construction. If the maps~\eqref{eqn:cartanmorphism} fail to constitute an orthonormal frame, or the combined map \(\chi\) fails to be a Cartan morphism, the proposition implies the non-existence of a Cartan morphism.

This result concludes our discussion of the relation between Finsler spacetimes and observer space Cartan geometries. We have shown in the preceding section~\ref{subsec:recfinsler} that if we start from a Finsler spacetime and calculate its observer space Cartan geometry using the method detailed in section~\ref{sec:finslercartan}, we can partially reconstruct the initial Finsler geometry using the method detailed in section~\ref{sec:cartanfinsler}. In this section we have discussed the converse statement and derived under which conditions one re-obtains a given observer space Cartan geometry by applying the methods from sections~\ref{sec:cartanfinsler} and~\ref{sec:finslercartan} in the opposite order. In the remainder of this article we will apply our findings to gravity theories based on either of these two geometries.

\section{Application to gravity actions}\label{sec:gravity}
We have shown how to obtain a Cartan geometry on observer space from a Finsler spacetime in section~\ref{sec:finslercartan} and vice versa in section~\ref{sec:cartanfinsler}. Our discussion has so far been purely kinematic. We will now turn our focus to the gravitational dynamics of these two different geometrical models. For this purpose we will consider gravity actions on either of these geometries. First, we will discuss MacDowell-Mansouri gravity on observer space and translate the action into Finsler language in section~\ref{subsec:cartangrav}. We will then follow the opposite direction and translate Finsler gravity to Cartan language in section~\ref{subsec:finslergrav}.

\subsection{MacDowell-Mansouri gravity in Finsler language}\label{subsec:cartangrav}
The first gravity theory we discuss in this article is a lift of MacDowell-Mansouri gravity to observer space~\cite{Gielen:2012fz}. The essential part of the gravitational action can be written as
\begin{equation}\label{eqn:cartangrav}
S_G = \int_O\kappa_{\mathfrak{h}}(F_{\mathfrak{h}} \wedge F_{\mathfrak{h}}) \wedge \tau_{\mathfrak{y}}(b \wedge b \wedge b)\,.
\end{equation}
Note that this is an action on \(O\), although the terms in the action are forms which are defined on \(P\). It is to be understood that these forms are pulled back along a section of the observer bundle~\(\pi: P \to O\). The action is chosen so that the pullback is independent of the choice of this section. We will now explain the terms in the action in detail.

We start with the last term \(\tau_{\mathfrak{y}}(b \wedge b \wedge b)\), where \(b\) denotes the $\mathfrak{y}$-valued part of the Cartan connection \(A\) and \(\tau_{\mathfrak{y}}\) is a $K$-invariant trilinear form on \(\mathfrak{y}\). For simplicity, we choose this to be
\begin{equation}
\tau_{\mathfrak{y}}(b \wedge b \wedge b) = \epsilon_{\alpha\beta\gamma}b^{\alpha} \wedge b^{\beta} \wedge b^{\gamma}\,.
\end{equation}
Similarly, \(\kappa_{\mathfrak{h}}\) in the first term denotes a non-degenerate, $H$-invariant inner product on \(\mathfrak{h}\). Again we make a simple choice and set
\begin{equation}
\kappa_{\mathfrak{h}}(h,h') = \tr_{\mathfrak{h}}(h,\star h')\,,
\end{equation}
for \(h,h' \in \mathfrak{h}\), where \(\tr_{\mathfrak{h}}\) denotes the Killing form on \(\mathfrak{h}\) and \(\star\) is a Hodge star operator. We remark that other choices are possible and related to the Immirzi parameter~\cite{Gielen:2012fz,Wise:2009fu}. In components we can write the Killing form as
\begin{equation}
\tr_{\mathfrak{h}}(h,h') = h^i{}_jh'^j{}_i
\end{equation}
 and the Hodge star operator as
\begin{equation}
(\star h)^i{}_j = \eta^{im}\eta^{ln}\epsilon_{mjkl}h^k{}_n\,.
\end{equation}
Finally, \(F_{\mathfrak{h}}\) denotes the $\mathfrak{h}$-valued part~\eqref{eqn:curvatureh1} of the curvature \(F\) of the Cartan connection. Recall that \(F_{\mathfrak{h}} = R + \frac{1}{2}[e,e]\) splits into the curvature \(R\) of \(\omega\) and a purely algebraic part \([e,e]\) which depends on the choice of the group \(G\) in the model Klein geometry \(G/H\). For MacDowell-Mansouri gravity we are forced to choose \(G\) so that \([e,e] \neq 0\), i.e., we must choose a model Klein geometry which corresponds to a non-vanishing cosmological constant. The reason for this restriction becomes apparent below. We can use the split of \(F_{\mathfrak{h}}\) to decompose \(\tr_{\mathfrak{h}}(F_{\mathfrak{h}} \wedge \star F_{\mathfrak{h}})\) into the following components:
\begin{itemize}
\item
A cosmological constant term:
\begin{equation}
\frac{1}{4}\tr_{\mathfrak{h}}([e,e] \wedge \star[e,e]) = -(\sgn\Lambda)^2\epsilon_{ijkl}e^i \wedge e^j \wedge e^k \wedge e^l\,.
\end{equation}
\item
A curvature term:
\begin{equation}
\tr_{\mathfrak{h}}([e,e] \wedge \star R) = \frac{1}{6}\sgn\Lambda\,g^{F\,ab}R^c{}_{acb}\epsilon_{ijkl}e^i \wedge e^j \wedge e^k \wedge e^l + (\ldots)\,.
\end{equation}
\item
A Gauss-Bonnet term:
\begin{equation}
\tr_{\mathfrak{h}}(R \wedge \star R) = -\frac{1}{96}R_{abcd}R_{efgh}\epsilon^{abef}\epsilon^{cdgh}\epsilon_{ijkl}e^i \wedge e^j \wedge e^k \wedge e^l + (\ldots)\,.
\end{equation}
\end{itemize}
The ellipsis in the expressions above indicates that we have omitted terms which are of the form \(c \wedge b\), where \(c\) is a 3-form and \(b\) is the boost part of the Cartan connection. These terms do not contribute to the total action since their wedge product with \(\tau_{\mathfrak{y}}(b \wedge b \wedge b)\) vanishes. Note the appearance of the 7-form
\begin{equation}\label{eqn:cartanvolume}
\Omega_S = \epsilon_{ijkl}\epsilon_{\alpha\beta\gamma}e^i \wedge e^j \wedge e^k \wedge e^l \wedge b^{\alpha} \wedge b^{\beta} \wedge b^{\gamma}
\end{equation}
in all terms of the action \(S_G\). One easily checks that this is the volume form of the pullback
\begin{equation}
\tilde{G}_S = \eta_{ij}f^{-1}{}^i_af^{-1}{}^j_b\,dx^a \otimes dx^b + \delta_{\alpha\beta}f^{-1}{}^{\alpha}_af^{-1}{}^{\beta}_b\,\delta y^a \otimes \delta y^b = \eta_{ij}e^i \otimes e^j + \delta_{\alpha\beta}b^{\alpha} \otimes b^{\beta}
\end{equation}
of the Sasaki metric~\eqref{eqn:sasakimetric} from \(TM\) to \(O\). The total action then takes the form
\begin{equation}
S_G = \int_O\left(\frac{1}{6}\sgn\Lambda\,g^{F\,ab}R^c{}_{acb} - \frac{1}{96}R_{abcd}R_{efgh}\epsilon^{abef}\epsilon^{cdgh} - (\sgn\Lambda)^2\right)\Omega_S
\end{equation}
and is written in terms of Finsler geometry only. From this we see that in the case \(\Lambda = 0\), which corresponds to the choice \(G = \mathrm{ISO}_0(3,1)\), both the cosmological constant term and the curvature term vanish, so that only the Gauss-Bonnet term remains. A non-vanishing cosmological constant is therefore required to obtain an action which contains a term equivalent to the Ricci scalar, and thus resembles the Einstein-Hilbert action in the limit of a Lorentzian spacetime. Note that the relative weight of the terms in the resulting action, and thus the magnitude of the cosmological constant, can be adjusted by introducing appropriate scale factors into the algebra relations~\eqref{eqn:algebra} and the choice of basis vectors \(\mathcal{Z}_i\).

\subsection{Finsler gravity in Cartan language}\label{subsec:finslergrav}
In the preceding section we have shown how to rewrite the action of MacDowell-Mansouri gravity on observer space, given in Cartan language, using objects of Finsler geometry. We will now follow the opposite direction and translate a Finsler geometric gravity action into Cartan language. For this purpose we start from the gravity action~\cite{Pfeifer:2011xi}
\begin{equation}\label{eqn:finslergrav}
S_G = \int_OR^a{}_{ab}y^b\Omega_S\,,
\end{equation}
where \(R^a{}_{bc}\) denotes the curvature of the non-linear connection given by
\begin{equation}\label{eqn:nonlincurv}
R^a{}_{bc} = y^d(\delta_cF^a{}_{bd} - \delta_bF^a{}_{cd} + F^e{}_{bd}F^a{}_{ce} - F^e{}_{cd}F^a{}_{be})\,,
\end{equation}
and \(\Omega_S\) is the volume form of the pullback of the Sasaki metric~\eqref{eqn:sasakimetric} from \(TM\) to \(O\). We have already seen in~\eqref{eqn:cartanvolume} how to rewrite the volume form \(\Omega_S\) in terms of the Cartan connection. In order to translate the curvature tensor~\eqref{eqn:nonlincurv} into Cartan language we use the following geometric interpretation. The curvature \(R^a{}_{bc}\) of the non-linear connection measures the non-integrability of the horizontal tangent bundle \(HTM \subset TTM\) and can be calculated from the Lie bracket~\cite{Bucataru}
\begin{equation}
[\delta_b,\delta_c] = R^a{}_{bc}\bar{\partial}_a
\end{equation}
of the horizontal vector fields \(\delta_a = \partial_a - N^b{}_a\bar{\partial}_b\) that span \(HTM\). It thus appears natural to define the corresponding curvature on observer space as the vertical part of the Lie bracket of horizontal vector fields on \(O\). However, it turns out to be easier to work on \(P\) instead of \(O\). Recall that the horizontal tangent bundle \(HP\) is spanned by the fundamental vector fields~\eqref{eqn:vecfieldz}
\begin{equation}
\underline{A}(\mathcal{Z}_i) = f_i^a(\partial_a - f_j^bF^c{}_{ab}\bar{\partial}^j_c)\,,
\end{equation}
where \(\mathcal{Z}_i \in \mathfrak{z}\) are the generators of translations. Their Lie bracket yields
\begin{equation}
[\underline{A}(\mathcal{Z}_i),\underline{A}(\mathcal{Z}_j)] = f_i^bf_j^cf_k^d(\delta_cF^a{}_{bd} - \delta_bF^a{}_{cd} + F^e{}_{bd}F^a{}_{ce} - F^e{}_{cd}F^a{}_{be})\bar{\partial}^k_a\,,
\end{equation}
which does indeed reproduce the components~\eqref{eqn:nonlincurv} of the curvature tensor \(R^a{}_{bc}\). In order to select only the components we are interested in, we now project the result to the boost part \(BP\) of the tangent bundle, i.e., we apply the 1-form \(b \in \Omega^1(P,\mathfrak{y})\) given by \(b^{\alpha} = \omega^{\alpha}{}_0\) with the Cartan connection~\eqref{eqn:connectionh}. From this we obtain
\begin{equation}
b^{\alpha}([\underline{A}(\mathcal{Z}_i),\underline{A}(\mathcal{Z}_j)]) = f^{-1}{}^{\alpha}_af_i^bf_j^cR^a{}_{bc}\,.
\end{equation}
To reproduce the term \(R^a{}_{ab}y^b\) in the Finsler gravity action~\eqref{eqn:finslergrav}, we finally calculate the contraction
\begin{equation}
b^{\alpha}([\underline{A}(\mathcal{Z}_{\alpha}),\underline{A}(\mathcal{Z}_0)]) = f_0^bR^a{}_{ab}\,.
\end{equation}
Note that the result is a function on \(P\) which is invariant under the action of \(K\), i.e., it is constant along the fibers of the principal bundle \(\pi: P \to O\). This is a consequence of the equivariance of the Cartan connection. It therefore descends to a function on \(O\). Combining this with the result with the expression~\eqref{eqn:cartanvolume} for the volume form \(\Omega_S\) we finally arrive at the action
\begin{equation}\label{eqn:cartanaction}
S_G = \int_Ob^{\alpha}([\underline{A}(\mathcal{Z}_{\alpha}),\underline{A}(\mathcal{Z}_0)])\Omega_S\,,
\end{equation}
which is now completely rewritten in terms of the Cartan connection and its fundamental vector fields. Note that since observer space Cartan geometry is more general than Finsler geometry, this translated action is not unique. We may add further terms to the action that vanish in case the Cartan geometry is induced by a Finsler geometry. We will not discuss terms of this type here, and leave the action~\eqref{eqn:cartanaction} as it is.

This concludes our discussion of gravity in the context of Finsler geometry and Cartan geometry on observer space. We have shown that we can translate the actions of gravity theories on observer space from Cartan language to Finsler language and vice versa. This proves the practical use of the constructions displayed in sections~\ref{sec:finslercartan} and~\ref{sec:cartanfinsler} in gravitational physics.

\section{Conclusion}\label{sec:conclusion}
In this article we have revealed a close connection between two seemingly different extensions to the Lorentzian geometry of spacetime: Cartan geometry of observer space and Finsler spacetimes. We have shown that each Finsler spacetime possesses a well-defined observer space which can be equipped with a Cartan geometry. Conversely, we have derived conditions under which an observer space Cartan geometry gives rise to a Finsler spacetime. We have further shown that these two constructions complement each other in the sense that one can partially reconstruct a given Finsler or Cartan geometry by applying both constructions in the appropriate order.

Our work provides the mathematical foundation for translating physical theories from Finsler to Cartan geometry and vice versa. We have demonstrated this possibility by applying our construction to two gravity theories: Finsler gravity and MacDowell-Mansouri gravity on observer space. It turned out that their actions can easily be translated between the two different geometric descriptions of spacetime.

It is our aim to further investigate the properties of the two translated gravity theories, their field equations and their solutions. Moreover, we aim to apply our construction to further physical theories on either side of the relation. While in this article we have restricted ourselves to gravity theories in vacuum, one may naturally pose the question whether one can also translate matter field theories from Finsler to Cartan geometry and vice versa. In particular it needs to be shown whether a consistent matter coupling to one of the discussed background geometries of spacetime translates into a consistent coupling to the other one.

Finally, our construction provides a possible starting point for the geometrodynamic description of Finsler geometry. The central concept of geometrodynamics is the time evolution of a spatial geometry, which requires a split of the geometry of spacetime into spatial and temporal components. The Cartan geometric description of observer space naturally provides this split, and hence gives rise to the notion of Cartan geometrodynamics on Lorentzian spacetimes~\cite{Gielen:2011mk}. The work presented in this article extends this framework to Finsler spacetimes. In particular, it provides a potential Finsler geometric generalization to the derivation of covariant Ashtekar variables from observer space Cartan geometry~\cite{Gielen:2012fz}. Further work is required to show how this is related to previous generalizations of Ashtekar variables to Finsler geometry~\cite{Vacaru:2008qf}, and to connect Finsler gravity to sum-over-histories formulations such as spin foam models or causal dynamical triangulations.

\acknowledgments
The author is happy to thank Steffen Gielen, Christian Pfeifer, Sergiu Vacaru and Derek Wise for helpful comments and discussions. He gratefully acknowledges full financial support from the Estonian Science Foundation through the research grant ERMOS115.

\end{document}